\documentclass[a4paper]{article}

\usepackage[affil-it]{authblk}
\usepackage{latexsym}
\usepackage{xspace}
\usepackage{graphics}
\usepackage{color}

\usepackage{lineno,hyperref,doi}

\usepackage[square,sort,comma,numbers]{natbib}

\usepackage{amsfonts}

\def\QED{{\boldmath$\square$}}
\def\markatright#1{\leavevmode\unskip\nobreak\quad\hspace*{\fill}{#1}}
\def\qed{\markatright{\QED}}

\makeatletter
\def\@maketitle{%
  \newpage
  \null
  \vskip 2em%
  \begin{center}%
  \let \footnote \thanks
    {\Large\bfseries \@title \par}%
    \vskip 1.5em%
    {\normalsize
      \lineskip .5em%
      \begin{tabular}[t]{c}%
        \@author
      \end{tabular}\par}%
    \vskip 1em%
    {\normalsize \@date}%
  \end{center}%
  \par
  \vskip 1.5em}
\makeatother

\newcommand{\calH}{\mathcal{H}}
\newcommand{\Nat}{I\!\!N}

\newcommand{\poly}{\ensuremath{\mathbf{P}}}
\newcommand{\NP}{\ensuremath{\mathbf{NP}}}
\newcommand{\tempex}{\textsc{TEXP}\xspace}

\newcommand{\calG}{\mathcal{G}}
\newcommand{\calC}{\mathcal{C}}

\newtheorem {theorem}{Theorem}[section]
\newtheorem{definition}[theorem]{Definition}{\bf}{\it}
\newtheorem{lemma}[theorem]{Lemma}{\bf}{\it}
\newtheorem{proposition}[theorem]{Proposition}{\bf}{\it}

\newtheorem{corollary}[theorem]{Corollary}{\bf}{\it}
\newtheorem{observation}[theorem]{Observation}

\begin{document}

\title{On Temporal Graph Exploration\thanks{An extended abstract with some of the results of this
paper has appeared in \cite{EHK15}. A journal version of the paper can be
found in \cite{EHK2021}. Research partially supported by EPSRC
grant EP/S033483/1.}}

\author{Thomas Erlebach$^\dagger$}
\author{Michael Hoffmann%
  \thanks{E-mail address: \texttt{\{te17,mh55\}@leicester.ac.uk}}}
\affil{School of Informatics, University of Leicester, Leicester, England}

\author{Frank Kammer%
  \thanks{E-mail address: \texttt{frank.kammer@mni.thm.de}}}
\affil{THM, University of Applied Sciences Mittelhessen,
Giessen, Germany}

\date{\vspace{-10mm}}

\maketitle

\begin{abstract}
A temporal graph is a graph in which the edge set can change
from one time step to the next.
The temporal graph exploration problem
\tempex is the problem of computing a foremost
exploration schedule for a temporal graph, i.e., a
temporal walk that starts at a given start node, visits all nodes of
the graph, and has the smallest arrival time. 
In the first part of the paper, we consider only undirected temporal graphs that are connected
at each time step.
For such temporal graphs with $n$ nodes, we show that
it is \NP-hard to approximate \tempex with ratio $O(n^{1-\varepsilon})$
for every $\varepsilon>0$. We also provide an explicit construction
of temporal graphs that require $\Theta(n^2)$ time steps to be explored.

In the second part of the paper,
we still consider temporal graphs that are connected in each time step,
but we assume that the underlying graph
(i.e. the graph that contains all edges that are
present in the temporal graph in at least one time step) belongs
to a specific class of graphs.
Among other results, we show that temporal graphs
can be explored in $O(n^{1.5}k^{1.5}\log n)$ time steps if the underlying graph
has treewidth $k$, in $O(n^{1.8}\log n)$ time steps if the underlying graph
is planar, and in $O(n\log^3 n)$ time steps if the underlying graph
is a $2\times n$ grid.

In the third part of the paper, we consider settings where the graphs
in future time steps are not known and the exploration schedule is
constructed online. We replace the connectedness assumption by
a weaker assumption and 
show that
$m$-edge temporal graphs with regularly present edges
and with probabilistically present edges can be explored online in 
$O(m)$ time steps and $O(m \log n)$ time steps with high probability, respectively.
We finally show that the latter result
can be used to obtain a distributed algorithm for the gossiping problem
in random temporal graphs.
\\

\noindent {\bf Keywords:} non-approximability, planar graphs,
bounded treewidth, regularly present edges,
random edges, distributed algorithm, gossiping problem
\end{abstract}

\noindent {\bf ACM classification: C.2.4; F.2.2; G.2.2}

\nolinenumbers

\section{Introduction}
\label{sec:intro}%
Many networks are not static and change over time. For example,
connections in a transport network may only operate at certain
times. Connections in social networks are created and removed
over time. Links in wired or wireless networks may change
dynamically. 
Dynamic networks have been studied in the context of
faulty networks, scheduled networks, time-varying networks, distributed 
algorithms, etc.
For an overview,
we refer to \cite{CasFQS12}, \cite{Lyn96}, \cite{Mic16}, and \cite{Sch02}.
We consider a model of time-varying networks called
\emph{temporal graphs}.
A temporal graph $\calG$ is given by a sequence of 
undirected graphs $G_0=(V,E_0)$, $G_1=(V,E_1)$, $G_2=(V,E_2)$, \ldots, $G_L=(V,E_L)$ that all
share the same vertex set $V$, but whose edge sets may differ.
The number $L$
is called the
\emph{lifetime}~of~$\calG$. In Sections~\ref{sec:general} and~\ref{sec:underlying}, we assume that the whole
temporal graph is presented to the algorithm.

In temporal graphs, the natural notion of moving through
the graph using at most one edge in each time step leads to
the concept of temporal paths.
Standard algorithms for well known path-related problems such
as connected components, diameter, reachability, shortest paths, graph
exploration, etc.\ cannot be used directly in temporal graphs,
as temporal paths behave quite differently from paths in static
graphs.
For instance, Berman~\cite{Ber96} observed that the vertex version of
Menger's theorem does not hold for temporal graphs.
Moreover, 
algorithms for path problems in static graphs usually have a natural
objective to optimize (the length of the path or tour), whereas
temporal graphs allow us to consider different possible objectives.
For example, Bui{-}Xuan, Ferreira, and Jarry~\cite{XuaFJ03} search for a {\em shortest}, 
a {\em foremost}, or a {\em fastest
$s$-$t$-path}, i.e., a temporal path from $s$ to $t$ with a
minimal number of edges, earliest arrival time, or a shortest duration,
respectively. % \te{\cite{XuaFJ03}}.}

We consider the temporal graph exploration problem, introduced by
Mich\-ail and Spirakis~\cite{MicS16} and denoted
\tempex, whose goal is to compute an exploration schedule (or temporal walk)
with the earliest arrival time such that
an agent can visit all vertices in $V$. 
The agent is initially located at a start
node $s\in V$. In time step~$i$ ($i\ge 0$) the agent can either
remain at its current node or move to an adjacent node via
an edge that is present in $E_i$. 
Unless we consider an online variant of \tempex we always assume that 
we know the graphs of all time steps in advance.

We remark that static undirected graphs can easily be explored in
less than $2|V|$ steps using depth-first search, while
there are static directed graphs for which exploration
requires $\Theta(|V|^2)$ steps.

Some work on dynamic networks considers temporal graphs whose
edges appear with some kind of periodicity~\cite{CasFMS14,LiuW09}
or graphs whose edges appear or fail with a certain probability~\cite{AjtKS82,KarNT94,Kes80}.
Except in Sections~\ref{sec:regular}-\ref{sec:gossiping} % and~\ref{sec:random},
we do not assume that edges appear with some periodicity or certain
probabilistic properties. %, which usually simplify the analysis.
Instead, unless stated otherwise, we only assume that
the given temporal graph is always connected, meaning
that each of the graphs $G_i$ for $0\le i\le L$ is a
connected graph in the standard sense.
Michail and Spirakis~\cite{MicS16} observe that without such an assumption, 
it is even \NP-complete to decide if the graph can
be explored at all. %~\cite{MicS16}.) 
They also show that,
under this connectedness
assumption, every temporal graph with $n$ vertices can be explored 
with an arrival time of at most $n^2$.
We focus on the case where an exploration schedule always exists and %therefore 
assume throughout this paper that the lifetime of the given
temporal graph is at least $|V|^2$.

A \emph{$\rho$-approximation algorithm} for \tempex %or $k$-\tempex 
is an algorithm
that runs in polynomial time and outputs an exploration schedule
whose arrival time is at most $\rho$ times the arrival time of the optimal
exploration schedule.
Michail and Spirakis
also prove that there is no $(2-\varepsilon)$-approximation
for \tempex for any $\varepsilon>0$ unless \poly\,=\,\NP.
They define the dynamic diameter of a temporal graph
to be the minimum integer $d$ such that for every time $i$
and every vertex $v$, every other vertex $w$ can be reached
in $d$ time steps on a temporal walk that starts at $v$ at time~$i$.
They provide a $d$-approximation algorithm for \tempex,
where $d$ is the dynamic diameter of the temporal graph.
We note that $d$ can be as large as $n-1$, and hence
the approximation ratio of their algorithm in terms
of $n$ is only $n-1$. Thus, there is a significant
gap between the lower bound of $2-\varepsilon$ and
the upper bound of $n-1$ on the best possible
approximation ratio, which we address in this paper.

\paragraph{Our contributions}
We close the gap between the upper and lower bound on the
approximation ratio of \tempex by proving that
it is \NP-hard to approximate \tempex with ratio $O(n^{1-\varepsilon})$
for every $\varepsilon>0$. Furthermore, we provide an explicit construction
of temporal graphs that require $\Theta(n^2)$ time steps to be explored.
We also prove that the problem is $\NP$-hard to approximate with
ratio $O(\Delta^{1-\varepsilon})$ for every $\varepsilon>0$ if the
underlying graph (i.e., the graph that contains
all edges that are present in the temporal graph in at least one time step)
has degree $\Delta=\Omega(n^{\delta})$ for any constant $\delta>0$.

We then consider \tempex under the assumption that the underlying graph
belongs to a specific class of graphs.
We present an exploration method for temporal graphs whose underlying
graphs can be split into
``small'' components using ``small'' separators. This allows us to
show that temporal graphs
can be explored in $O(n^{1.5}k^{1.5}\log n)$ time steps if the underlying graph
has treewidth~$k$ and in $O(n^{1.8}\log n)$ time steps if the underlying graph
is planar.
Furthermore, we show that temporal graphs can be explored
in $O(n\log^3 n)$ time steps if the underlying graph
is a $2\times n$ grid and in $O(n)$ time steps if the underlying graph
is a cycle or a cycle with a chord. Several of these results
use a technique by which we specify an exploration schedule
for multiple agents and then apply a general reduction from
the multi-agent case to the single-agent case.
We also show that 
temporal graphs exist where the underlying graph is a bounded-degree
planar graph and each $G_i$ is a path such that the optimal
arrival time of the exploration walk is $\Omega(n\log n)$. % steps.

After this, we consider a setting where the edges of the underlying graph 
are present
with a certain regularity or with a certain probability, and the
algorithm does not have full knowledge of the edges that are present
in each future time step. Thus, the exploration schedule needs to be
determined online.
We show that it is possible to determine an exploration schedule online
that explores
$m$-edge temporal graphs
with an arrival time of $O(m)$ for regularly present edges
or $O(m \log n)$ for probabilistically present edges (the latter holds
with high probability).

As an application of our graph-exploration algorithm on random 
temporal graphs, we present a distributed algorithm for the 
so-called gossiping problem~\cite{BerGRV95,ChlK02,HroKPRU05}.
In the gossiping problem, every vertex of a graph 
has to communicate a private value to all the other
vertices. 
Thus, $\Omega(n)$ messages between neighbors are required even if a connected
$n$-vertex graph
is given that does not change over time. Ignoring special cases of negligible 
probability, we show that, after some initialization
using $O(m \log^2 n)$ messages, 
we can solve instances of the gossiping problem 
with $O(m \log n)$ messages per instance
on 
an $n$-vertex $m$-edge graph $G$ that is ``connected'' (in a certain sense). 
If $G$ is sparse (i.e., $m=O(n)$), 
we thus need only a factor of $O(\log n)$ more messages than the lower
bound of $\Omega(n)$ to solve an instance of the problem. 

The remainder of the paper is structured as follows.
Related work is discussed in Section~\ref{sec:related}.
In Section~\ref{sec:prelim}, we give some %provide 
definitions
and %present some 
preliminary results. Section~\ref{sec:general}
presents our inapproximability results for general temporal
graphs and for temporal graphs whose underlying graph has maximum
degree $\Delta$. The results for temporal graphs with restricted
underlying graphs are given in Section~\ref{sec:underlying}.
Temporal graphs with regularly present edges and probabilistically present
edges are
considered in Sections~\ref{sec:regular} and~\ref{sec:random}, respectively.
The distributed algorithm for the gossiping problem is given
in Section~\ref{sec:gossiping}.
Section~\ref{sec:conc} concludes the paper.
\subsection{Related work}
\label{sec:related}%
The problem of exploring a static graph (as part of an exploration of a maze)
is already formulated by Shannon~\cite{Sha51}. In that work and in many subsequent
studies, the exploration of unknown
static graphs is considered. For example, Bender et al.~\cite{BenFRSV98} analyze
conditions that allow the exploration of an unknown directed graph
making very limited assumptions about the environment. They also
mention applications in robot navigation and searching the World
Wide Web.

Models of temporal graphs similar to the model used in
this paper are considered by various authors.
Berman~\cite{Ber96} studies temporal networks in
which each edge has an arrival time and a departure time,
termed \emph{scheduled networks}. He gives a polynomial
algorithm for the problem of determining time periods
during which two given nodes remain connected if $k$ edges
fail.
As already mentioned, he also shows that a temporal analogue of Menger's
theorem does not hold as the maximum number of node-disjoint
time-respecting paths between two nodes can be strictly
smaller than the minimum number of nodes whose deletion disconnects
the two nodes.
Biswas, Ganguly, and Shah~\cite{BisGS15} present several heuristics and an
FPTAS for the problem of finding
a temporal path of minimum total length with bounded penalty
in a network where each edge is associated with
a start time, an end time, a length, and a penalty.
Kempe, Kleinberg, and Kumar~\cite{KemKK02} consider a model
of temporal graphs where each edge $e$ of the graph is associated
with a label $\lambda(e)$ that represents the %time 
time step in which the edge is present.
They characterize the temporal graphs in which Menger's theorem
holds and show that it is \NP-complete to decide
whether there are two node-disjoint time-respecting paths
between a given source and sink. Furthermore,
they provide a polynomial-time
algorithm for computing node-disjoint time-respecting paths
for any constant number of terminal pairs in temporal directed
acyclic graphs. They also consider inference problems where
some edge labels are missing (only an interval containing the
exact value of the label is provided) and the goal is to infer the
values of these labels from other data. In particular, they give
a polynomial-time algorithm for the reachability inference problem, i.e., for
checking the existence of a labeling with the property that
all nodes in a set $P$ are reachable via time-respecting paths
from the source $s$ while all nodes in another set $N$ are not
reachable. 

Temporal graphs where the label $\lambda(e)$ of each edge $e$ is
the set of time steps during which the edge~$e$ is present are
considered by 
Mertzios, Michail,
Chatzigiannakis, and
Spirakis~\cite{MerMCS13}. They give efficient
algorithms for the problem of computing a 
foremost
path between two vertices. They also present an analogue of Menger's
theorem that holds for temporal graphs, showing that the number
of out-disjoint temporal paths between two nodes is equal to the
number of node departure times that have to be removed to separate
the two nodes. Furthermore, they consider temporal network design
problems where the goal is to determine a label function $\lambda$
that satisfies given connectivity properties and minimizes either
$\sum_{e\in E}|\lambda(e)|$ or $\max_{e\in E}|\lambda(e)|$.

Michail and Spirakis~\cite{MicS16} further study
this model of temporal graphs.
In addition to their results for \tempex that were already discussed
in the first part of Section~\ref{sec:intro},
they consider
the temporal traveling salesperson problem (TSP) under the assumption
that each $G_i$ is a complete directed graph whose edges have weights
in $\{1,2\}$ (and the weight of each edge can change from one time step
to the next). They
present a $(1.7+\varepsilon)$-approximation algorithm for
this problem and a $(\frac{13}{8}+\varepsilon)$-approximation
algorithm for the case that the lifetime of the given temporal
graph is~$n$. Their algorithms make use of connections to suitably
defined temporal matching problems.

Another variant of TSP for temporal
graphs is studied by Brod{\'{e}}n,
Hammar, and Nilsson~\cite{BroHN04}. The temporal graph under
consideration is a complete graph with lifetime equal to the number
of vertices, and the edge costs can change over time.
They assume that the edge costs change at most $k$ times during
the lifetime of the graph.
The goal is to compute a tour that uses one edge in each time step
and minimizes the total edge cost, where each edge of the tour
contributes its cost in the time step in which it is traversed.
They mainly study the online version of the problem, but they
also give a polynomial-time approximation algorithm
for the case where the edge costs are $1$ or $2$. The algorithm
has approximation ratio $2-2/3k$.

Flocchini, Mans, and Santoro~\cite{FloMS09} consider the graph exploration
problem for temporal graphs 
with periodicity defined by the periodic movements
of carriers. They assume that the graph is unknown
to the exploring agent and study necessary and sufficient
conditions under which the problem can be solved.
The temporal exploration problem for
the special case where the underlying graph is a ring is studied
for the setting of $T$-interval-connectivity (the intersection
of the graphs of any $T$ consecutive time steps is connected) by Ilcinkas and
Wade~\cite{IW18}.
Distributed algorithms for the exploration of temporal rings are
studied by Di~Luna, Dobrev, Flocchini, and Santoro~\cite{LDFS16}. Temporal exploration for the case where the underlying
graph is a cactus is studied by Ilcinkas, Klasing, and Wade~\cite{IKW14}.
Erlebach et al.~\cite{EKLSS19} show that temporal exploration
can be done in $O(n^{1.75})$ time steps if the graph in each time step
has bounded degree or if the agent is allowed to make two
moves in each time step.

Avin, Kouck{\'{y}}, and Lotker~\cite{AvinKL18} study the cover time of random
walks in temporal graphs. They show that a simple random
walk may take exponentially many time steps to visit all vertices
while a lazy random walk, which remains at the current vertex
with a certain probability, has polynomial cover time.

\section{Preliminaries}
\label{sec:prelim}%
\subsection{Definitions}
A \emph{temporal graph} $\calG$ with vertex set $V$ and lifetime $L$ is
given by a sequence of graphs $(G_i)_{0\le i\le L}$ with $G_i=(V,E_i)$.
In this and the next two sections,
we only consider temporal graphs for
which $L\ge |V|^2$ as well as each $G_i$ is connected and undirected. 
We refer to $i$, $0\le i\le L$, as \emph{time~$i$} or \emph{time step~$i$}.
The graph $G=(V,E)$ with $E=\bigcup_{0\le i\le L} E_i$ is called
the \emph{underlying graph} of $\calG$.

If the underlying graph of a temporal graph $\calG$ is a graph $G$,
we call the temporal graph $\calG$ a \emph{temporal realization} of $G$.
If $G$ belongs to the class of cycles or the
class of graphs of bounded treewidth, we also call $\calG$ a
\emph{temporal cycle} or a \emph{temporal graph of bounded
treewidth}, respectively, and similarly for any other graph classes.

If an edge $e$ is in $E_i$, we use the edge-time pair $(e,i)$ to denote
the existence of $e$ at time~$i$.
A \emph{temporal} (or \emph{time-respecting}) \emph{walk}
from $v_0\in V$ starting at time $t$ to
$v_k \in V$ is an alternating sequence of vertices and
edge-time pairs $v_0,(e_0,i_0),v_1,\ldots,$ $(e_{k-1},i_{k-1}),v_k$ such that
$e_j=\{v_j,v_{j+1}\}\in E_{i_j}$ for $0\le j\le k-1$ and
$t\le i_0<i_1<\cdots<i_{k-1}$. The walk reaches $v_k$ at time
$i_{k-1}+1$.
We often explain the construction of a temporal walk by describing
the actions of an agent that is initially located at the start vertex
and can in every time step $i$ either stay at its current node or move to a node that
is adjacent to its current node in $E_i$.

For a given temporal graph $\calG$ with source node~$s$, an
\emph{exploration schedule} $\mathcal S$ is a temporal walk that starts at $s$ at time $0$
and visits all vertices. The \emph{arrival time} 
of $\mathcal S$ %the exploration schedule 
is
the time step in which the walk reaches the last unvisited vertex.
An exploration schedule with smallest arrival time is called {\em foremost}.
The temporal exploration problem \tempex is defined as follows: Given a
temporal graph $\calG$ with source node $s$ and lifetime at least
$|V|^2$, compute a foremost exploration schedule. % with the shortest length.
We assume that the lifetime of the
given temporal graph $\calG$ is at least $|V|^2$ in order
to ensure the existence of a feasible solution.
We also consider a multi-agent variant $k$-\tempex of \tempex in which
there are $k$ agents initially located at $s$. An exploration schedule
$\mathcal S$
comprises temporal walks for all $k$ agents such that each node of $\calG$
is visited by at least one agent. The arrival time of $\mathcal S$ %the exploration schedule
is then 
the time when the last unvisited node is reached by an agent.

In the remainder of this section and in Sections~\ref{sec:general}
and~\ref{sec:underlying}
we assume that full knowledge about the graphs in all time steps
is available to the algorithm when the exploration schedule is
computed. In Sections~\ref{sec:regular}--\ref{sec:gossiping},
we consider online problems where full knowledge about the
graphs in future time steps is not available.

\subsection{Preliminary Results}
We establish some preliminary results that will
be useful for the proofs of our main results.
We start with a definition. Given a temporal graph $\calG$ with vertex set
$V$, the {\em temporal subgraph}
 $\calG'$ of  $\calG$ {\em induced} by a vertex set $V'\subseteq V$ is the temporal graph
obtained from $\calG$ by replacing the graph $G_i$ in each time step $i$ of $\calG$
by $G_i[V']$. Here, $G_i[V']$ denotes the subgraph of $G_i$ that is induced
by the vertex set $V'$, using the standard definition of induced subgraphs
for static graphs.
The following lemma allows us to bound the time steps of a temporal
walk from one vertex to another vertex in a temporal graph.

\begin{lemma}[reachability]
\label{lem:reachability}%
Let $\calG$ be a temporal graph with vertex set $V$.
Assume that an agent is at vertex~$u$. Let $v$ be another vertex
and $H$ a subset of the vertices that includes $u$ and $v$ and has
size $k$.
If there
exists a set of $k-1$ consecutive time steps starting
with some time step~$t$ such that
the temporal subgraph of $\calG$ induced by $H$ contains
a path from $u$ to $v$ (which can be a different path in each time step)
in each of these $k-1$ time steps,
then the agent can move from $u$ to $v$ in these $k-1$ time steps.
\end{lemma}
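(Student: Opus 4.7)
The plan is to argue not about a single walk but about the entire set of vertices that the agent could possibly occupy, and to show that this set strictly grows each step until $v$ is included. Let $i_1<i_2<\cdots<i_{k-1}$ be the $k-1$ time steps at which $G_{i_j}[H]$ is guaranteed to contain a path from $u$ to $v$. For $0\le j\le k-1$, let $R_j\subseteq H$ denote the set of vertices reachable from $u$ by a temporal walk that stays inside $H$ and uses only steps from $\{i_1,\dots,i_j\}$. Then $R_0=\{u\}$, and the recurrence $R_j=R_{j-1}\cup N_{G_{i_j}[H]}(R_{j-1})$ holds because in step $i_j$ any agent currently at some $w\in R_{j-1}$ may either stay at $w$ or move to a neighbour of $w$ in $G_{i_j}[H]$.

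The key step I would prove next is a strict-growth property: whenever $v\notin R_{j-1}$, we have $R_j\supsetneq R_{j-1}$. For this, take the guaranteed $u$-to-$v$ path $P_j$ inside $G_{i_j}[H]$. Its starting vertex $u$ lies in $R_{j-1}$ while its final vertex $v$ does not, so some edge of $P_j$ must connect a vertex of $R_{j-1}$ to a vertex of $H\setminus R_{j-1}$. Using this cut edge during step $i_j$ witnesses that at least one new vertex of $H$ enters $R_j$.

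Iterating the strict-growth property gives the conclusion. If $v$ never enters $R_j$ before step $k-1$, then $|R_{k-1}|\ge |R_0|+(k-1)=k=|H|$, which forces $R_{k-1}=H$ and hence $v\in R_{k-1}$ after all, a contradiction. Therefore $v\in R_{k-1}$, and unwinding the recurrence produces an explicit temporal walk inside $H$ that takes the agent from $u$ to $v$ using only the $k-1$ given steps, as required.

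The only subtle point I expect is to state the strict-growth argument carefully enough that it uses the full hypothesis of the lemma: the condition is about a $u$-to-$v$ path in every one of the $k-1$ steps, rather than a weaker condition tied to the agent's current position, and it is precisely this uniform choice of endpoints that supplies the cut edge needed in each round. Once this is nailed down, the rest is a clean counting argument bounded by $|H|=k$.
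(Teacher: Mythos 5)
Your proposal is correct and is essentially the paper's own argument: the paper likewise tracks the set $S_i$ of vertices the agent could have reached, uses the guaranteed $u$-to-$v$ path in $H$ to find the first vertex on it outside $S_i$ (your cut edge), and concludes by the same counting bound $|H|=k$. No substantive difference.
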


\begin{proof}
For $i\ge 0$, let $S_i$ be the set of vertices in $H$ that the agent could
have reached after $i$ time steps (i.e., by the start of time step $t+i$);
in other words, we can choose any
vertex in $S_i$, and the agent must be able to reach that vertex in $i$
time steps. We have $S_0=\{u\}$. We claim that
as long as $v\notin S_i$, at least one vertex of $H$ is added to $S_i$ to form
$S_{i+1}$. To see this, consider the graph in time step~$t+i$. By
the assumption, the
graph induced by $H$ contains a path from $u$ to $v$ in time step~$t+i$.
The first vertex on this path that is not
in $S_i$ is added to $S_{i+1}$. 
Therefore, if $v$ is not reachable by the start of time step $t+i$, then
$S_{i+1}\setminus S_i$ is non-empty.
Since $S_0=\{u\}$ and $H$ contains only $k$ vertices, $v$~must be
contained in $S_{k-1}$.
\qed
\end{proof}

The next lemma shows that a solution to 
{$k$-\tempex} also yields a solution to~\tempex.

\begin{lemma}[multi-agent to single-agent]
\label{lem:multi}%
Let $G$ be a connected graph with $n$ vertices. If every temporal realization of $G$
with lifetime at least~$t$
can be explored in $t$ time steps with $k$ agents, 
then there is a $\tau=O((t+n)k \log n)$
such that every temporal realization of $G$
with lifetime at least $\tau$
can be explored in $\tau$ time steps with one~agent.
\end{lemma}

\begin{proof}
Let $\calG$ be a temporal realization of $G$.
Consider the exploration schedule constructed as follows: In the first
$t$ time steps, the $k$ agents explore $\calG$. Then all $k$ agents move
back to the start vertex in $n$ time steps (Lemma~\ref{lem:reachability}). Refer to these $t+n$ time steps as
a \emph{phase}. Such a phase can be repeated as often as we
like.
The moves of the agents can be different in each phase, as they depend
on the edges that are present in the time steps of that phase, but each phase
can still be performed in $t+n$ time steps.
We construct a schedule for a single agent $x$ by copying one of the $k$
agents in each phase. In each phase, the $k$ agents together visit all $n$ vertices,
so the agent that visits the largest number of vertices that have not yet
been explored by $x$ must visit at least a $1/k$ fraction of these unexplored
vertices. We let $x$ copy that agent in this phase. This is repeated until
$x$ has visited all vertices.

The number of unexplored vertices is $n$ initially. Each phase takes
$t+n$
time steps and reduces the number of unexplored vertices by a factor of $1- 1/k$.
Then after $\lceil k\ln n\rceil+1$ phases the number of unexplored vertices is
less than
\mbox{$n\cdot(1-1/k)^{k\ln n}\le n e^{-\ln n} = 1$}
and therefore all vertices are explored. 
\qed
\end{proof}

The next two lemmas show that taking subgraphs and edge contractions do not increase the
arrival time of an
exploration %time 
in the worst case.

\begin{lemma}[subgraphs]
\label{lem:subgraphs}%
Let $G=(V,E)$ be a graph such that every temporal realization of $G$ with lifetime at least~$t$ can be explored
in $t$ time steps. Let $G'=(V',E')$ be a connected subgraph of $G$.
Then every temporal realization of $G'$ with lifetime at least~$t$ can also be
explored in $t$ time steps.
\end{lemma}

\begin{proof}
We first consider the case that $V'=V$.
Consider a temporal realization of $G'$. Consider the
corresponding temporal realization of $G$ in which all
the missing edges are never present.
A schedule $S$ with arrival time $t$ that explores the temporal
realization of $G$ is also a schedule of the temporal realization of~$G'$.

Let us now assume that $V\setminus V'=\{v\}$. 
Consider a temporal realization $\calG'$ of $G'$. Consider the
corresponding temporal realization of $G$ in which 
$v$ is always adjacent to the same vertex $w\in V'$, but to no other
vertex. In other words, in every time step
the edge $\{v,w\}$
is the only edge incident with $v$ that
is present.
If $S$ is a schedule with arrival time $t$ that explores
the temporal
realization of $G$, then we can ignore the moves on $\{v,w\}$ and obtain
in this way a suitable exploration schedule for the
realization $\calG'$ of $G'$. 

The lemma now follows by induction over the number of missing 
vertices of~$G'$.
\qed
\end{proof}

\begin{lemma}[edge contraction]
\label{lem:contraction}%
Let $G$ be a graph such that every temporal realization of $G$
with lifetime at least $t$
can be explored
in $t$ time steps. Let $G'$ be a graph that is obtained from $G$ by
contracting edges. Then every temporal realization of $G'$
with lifetime at least $t$
can also be
explored in $t$ time steps.
\end{lemma}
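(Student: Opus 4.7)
The plan is a lifting--projection reduction. Given a temporal realization $\calG'=(G'_i)_{0\le i\le L}$ of $G'$ with source $s'$, I would construct a temporal realization $\calG=(G_i)$ of $G$ on the same time axis, with source some $s\in\pi^{-1}(s')$, where $\pi\colon V(G)\to V(G')$ is the contraction map. The construction is arranged so that $\pi$ carries every temporal walk in $\calG$ to a temporal walk in $\calG'$ with the same arrival time; applying the hypothesis to $\calG$ then yields the desired exploration of $\calG'$ in $t$ steps.

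For the construction, let $F\subseteq E(G)$ denote the set of edges contracted to form $G'$, so that each fibre $\pi^{-1}(v')$ is connected by edges of $F$. At each time $i\le L$, I would let $E_i$ consist of all edges in $F$ together with, for every edge $\{u',v'\}\in E'_i$ with $u'\ne v'$, one representative edge $\{u,v\}\in E(G)$ satisfying $\pi(u)=u'$ and $\pi(v)=v'$ (which exists by the definition of contraction). Then $G_i$ is connected, since $G'_i$ is connected and each fibre is internally connected via $F$. Given an exploration schedule $\mathcal S$ of $\calG$ guaranteed by the hypothesis, each $\mathcal S$-move along an $F$-edge projects to a stay-in-place in $\calG'$, while each move along a representative edge projects to a valid move along its image in $E'_i$. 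Because $\mathcal S$ visits all of $V$, its projection visits $\pi(V)=V'$, yielding an exploration of $\calG'$ with arrival time at most $t$.

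The main technical point is to arrange that the constructed $\calG$ is a bona fide realization of $G$---with underlying graph exactly $G$ and lifetime at least $|V|^2$---so that the hypothesis truly applies. I would first, if necessary, extend the lifetime of $\calG'$ by appending arbitrary connected graphs on $V'$ (which does not alter the exploration problem), ensuring $L\ge|V|^2\ge t$. I would then append a few extra steps to $\calG$ after step $L$ (with matching trivial steps appended to $\calG'$) in which the edges of $G$ not yet used appear, while each $G_i$ remains connected. Since the exploration produced by the hypothesis has arrival time at most $t\le L$, its walk lies entirely within the initial lifted portion of the time axis, and the projection is unaffected by the padding and gives a valid exploration of $\calG'$ in $t$ steps.
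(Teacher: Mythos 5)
Your proposal is correct and follows essentially the same route as the paper: lift the given temporal realization of $G'$ to a temporal realization of $G$ in which the contracted edges are always present, apply the hypothesis to get a schedule, and project it back by ignoring moves along contracted edges. The paper simply keeps \emph{all} preimage edges present whenever their image is (which makes the underlying graph automatically equal to $G$), so your extra padding step to repair the underlying graph is needed only because you chose a single representative per image edge; otherwise the two arguments coincide.
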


\begin{proof}
Consider a temporal realization of $G'$. Consider the
corresponding temporal realization of $G$ in which all
the contracted edges are always present. Let $S$ be
a schedule %of 
with arrival time $t$ that explores the temporal
realization of $G$. $S$ can be executed in $t$ time steps
in the temporal realization of $G'$ simply by ignoring
moves along edges that were contracted.
\qed
\end{proof}

\begin{corollary}[minor]
\label{cor:minor}%
Let $G=(V,E)$ be a graph such that every temporal realization of $G$
with lifetime at least $t$
can be explored in $t$ time steps. Let $G'=(V',E')$ be a connected minor of $G$.
Then every temporal realization of $G'$
with lifetime at least $t$
can also be explored in $t$ time steps.
\end{corollary}

\begin{corollary}
\label{cor:contraction}%
Let $c<1$ be a positive constant and $t(n)$ a function
that is monotone increasing and satisfies $t(kn)=O(t(n))$ for every constant~$k>0$,
e.g., a polynomial.
Let $\mathcal{C}$ be a class of graphs
such that every temporal realization of every graph $G$
in the class
with lifetime at least $t(n)$
can be explored in $t(n)$ time steps, where $n$ is the number
of nodes of $G$. Let $\mathcal{D}$ be the class of
graphs that contains all graphs that can be obtained
from a graph $G$ in $\mathcal{C}$ with $n$ vertices
by at most $cn$ 
edge contractions. Then
there is a $\tau=O(t(n'))$ such that every temporal realization of a graph
in $\mathcal{D}$ with $n'$ vertices and
lifetime at least $\tau$ can be explored
in $\tau$ time steps.
\end{corollary}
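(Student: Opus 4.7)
The plan is to reduce the statement directly to Lemma~\ref{lem:contraction} together with the growth property of $t$. Given a graph $G'\in\mathcal{D}$ with $n'$ vertices, by definition of $\mathcal{D}$ there exists a graph $G\in\mathcal{C}$ with some number $n$ of vertices from which $G'$ is obtained by at most $cn$ edge contractions. Since each edge contraction decreases the number of vertices by exactly one, we have $n'\ge n-cn=(1-c)n$, i.e.\ $n\le n'/(1-c)$.

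Next, I would appeal to the hypothesis on $\mathcal{C}$: any temporal realization of $G$ can be explored in $t(n)$ steps. By Lemma~\ref{lem:contraction}, any temporal realization of $G'$ can then also be explored in $t(n)$ steps. Combining this with the bound $n\le n'/(1-c)$ and the monotonicity of $t$ gives an exploration schedule of length at most $t\!\left(n'/(1-c)\right)$.

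Finally, I would invoke the growth assumption $t(kn)=O(t(n))$ with the constant $k=1/(1-c)>0$ (which is well defined and constant since $c<1$ is a fixed constant) to conclude $t(n'/(1-c))=O(t(n'))$, finishing the proof.

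I do not expect a genuine obstacle: the only subtlety is making sure that the factor $1/(1-c)$ is indeed a constant independent of $n'$ so that the hypothesis $t(kn)=O(t(n))$ applies, and that the chain $n'\mapsto n\mapsto t(n)\mapsto t(n')$ uses monotonicity in the right direction. Everything else is a direct invocation of Lemma~\ref{lem:contraction}.
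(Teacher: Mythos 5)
Your proposal is correct and follows essentially the same route as the paper's own proof: bound $n\le n'/(1-c)$ via the vertex count after contractions, apply Lemma~\ref{lem:contraction}, and then use monotonicity together with $t(kn)=O(t(n))$ for the constant $k=1/(1-c)$. No gaps.
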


\begin{proof}
Let $G$ be a graph in the class $\mathcal{C}$,
and let $H$ be obtained from $G$ by at most $cn$ 
edge contractions. Furthermore, let $n$ and $n'$ be the number of vertices
of $G$ and $H$, respectively. Thus, $n'\ge (1-c)n$. Since every temporal
realization of $G$ can be explored  
in $t(n)$ time steps, by Lemma~\ref{lem:contraction}, every realization of $H$ can
also be explored    
in $t(n)\le t(n'/(1-c))=O(t(n'))$ time steps.
\qed
\end{proof}

  Now we consider how exploration schedules for
  the biconnected components of a graph can be combined
  into an exploration schedule for the whole graph.
  Recall that the \emph{block-cut tree} (often also called the
  \emph{block graph}) of a connected graph is a tree
  with a vertex for every block (biconnected component or bridge)
  and for every cut vertex of the graph, with an edge between
  a block and a cut vertex if the block contains that cut
  vertex~\cite{Diestel2010}.
  If the vertices representing blocks in the block-cut tree
  of the graph have bounded degree, the next lemma shows
  that the total exploration time is on the order of the
  sum of the exploration times of the blocks.
  \begin{lemma}
  \label{lem:biconnected}
  Assume that, for some function $t(n)\ge n-1$, every temporal
  realization of every $n$-vertex
  graph
  with lifetime at least $t(n)$
  from a class $\mathcal{C}$
  of biconnected graphs can be explored in $t(n)$
  time steps. Let $G=(V,E)$ be a connected graph
  all of whose biconnected components 
  belong to $\mathcal{C}$.
  Let $H_i=(V_i,E_i)$, for $1\le i\le k$, be
  the blocks of $G$.
  If all vertices representing blocks in the block-cut tree of $G$
  have degree at most $d$,
  then there is $\tau=O(d|V|+\sum_{i=1}^{k} t(|V_i|))$ such that every temporal realization of $G$
  with lifetime at least~$\tau$
  can be explored
  in %$\sum_{i=1}^{k} (d|V_i|+t(|V_i|))=$ 
  %$O(d|V|+\sum_{i=1}^{k} t(|V_i|))$
  $\tau$ time steps.
  \end{lemma}
  \begin{proof}
  Traverse the blocks
  of $G$ in the order of a depth-first search of the block-cut tree of $G$,
  starting in a block that contains the start vertex.
  Visit the blocks in that order one by one. Each block
  is explored upon its first visit; every subsequent visit to the block enters it via
  a cut vertex and leaves it via a possibly different cut vertex.
  Observe that, in every time step of the temporal realization of $G$,
  the subgraph induced by the vertex set $V_i$ of any block must
  be connected since we assume that the graph is 
  connected in each time step.
  By Lemma~\ref{lem:reachability}, we can move from a vertex in one block $H'=(V',E')$
  to the cut vertex shared with any adjacent block in $|V'|$ time steps. 
  Furthermore, each block is traversed (i.e., entered at one cut vertex
  and exited at a different cut vertex) at most $d$ times,
  and the total number of time steps for these $d$ traversals is at most $d|V'|$.
  The exploration of the temporal realization of $H'$ takes at most $t(|V'|)$
  time steps. (This holds also for blocks that are bridges, since bridges consist
  of two vertices and can be explored in one time step starting from either of
  the two vertices.)
  Thus, every temporal realization of $G$ can be explored in $\sum_{i=1}^{k}
  (d|V_i|+t(|V_i|))$ time steps, 
   which can be bounded by $O(d|V|+ \sum_{i=1}^{k} t(|V_i|))$ time steps 
  by the following two facts. First,
 each pair of biconnected components has at most one common vertex, a
 cut vertex. Second, the number of biconnected components containing the
 same cut vertex is equal to its degree in the block-cut
 tree of $G$, and the total degree of all vertices in the block-cut   
 tree is~$O(|V|)$.
  \qed
  \end{proof}

\section{Inapproximability Results for General and\\Bounded-Degree Temporal Graphs}
\label{sec:general}%
Recall that we assume that an algorithm has full knowledge about
the graphs in all time steps of the given temporal graph~$\calG$.
While static undirected connected graphs with $n$ nodes can always be explored in
less than $2n$ steps, the following lemma shows that there are temporal
graphs that require $\Omega(n^2)$ time steps.

\begin{lemma}\label{lem:badGraph}
There is an infinite family of temporal graphs that, for every $r\ge
1$, contains a temporal graph $\calG$ with $n=2r$ vertices that requires
$\Omega(n^2)$
time steps to be explored.
The graph contains $r$ vertices $\ell_j$, $0\le j\le r-1$, such that it takes
at least $r+1$ time steps to move from one of them to any other.
\end{lemma}

\begin{proof}
Let $V=\{c_j,\ell_j\mid 0\le j \le r-1\}$ be the vertex set of $\calG$.
For each time step $i\ge 0$, the graph $G_i$ is a star with center
$c_{i\bmod r}$. Figure~\ref{fig:badinstance} shows the edges of the
graphs in the first three
time steps. The start vertex is $c_0$.
If an agent is at a vertex that is not the current center, the agent can only wait or travel to the current center. As in the next time step the center will have changed, the agent is again at a vertex that is not the current center. 
Hence, to get from one vertex $\ell_j$ to another vertex $\ell_k$ for $k\neq j$, $r+1$ time steps are needed: The fastest way is to move from $\ell_j$ to the center of the current star, and then to wait for $r-1$ time steps until that vertex is again the center of a star, and then to move to $\ell_k$. The total number of time steps is $\Omega(n^2)$.
\qed
\end{proof}

We remark that the idea of a star whose center changes
in every time step was also used by Avin, Kouck{\'{y}}, and Lotker~\cite{AvinKL18} to
construct a graph on which a standard random walk has exponential
cover time.

\begin{figure}[b!]
\centerline{%
\scalebox{0.90}{\includegraphics{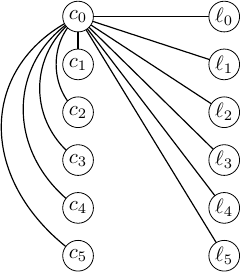}\hspace{7mm}%
\includegraphics{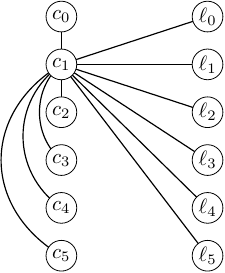}\hspace{7mm}%
\includegraphics{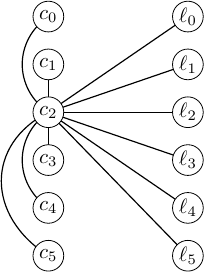}}
}%
\caption{The first three time steps of the temporal graph constructed in the proof of Lemma~\ref{lem:badGraph} for
$r=6$.}
\label{fig:badinstance}
\end{figure}%

Using Lemmas~\ref{lem:reachability} and~\ref{lem:badGraph} we can also show the following.

\begin{corollary}
For every number $k=o(n)$ of agents, 
there is an infinite family of
temporal graphs such that
each $n$-vertex temporal graph in the family %that 
cannot be explored in $o(n^2/k)$ time steps.
\end{corollary}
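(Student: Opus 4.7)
The plan is straightforward: invert Lemma~\ref{lem:multi} and combine it with the hard family produced by Lemma~\ref{lem:badGraph}. That lemma gives, for each $n$, a temporal graph $\calG$ on $2n$ vertices whose single-agent exploration requires $\Omega(n^2)$ steps. I would keep exactly this family; only the target bound changes, from $\Omega(n^2)$ in the single-agent setting to $\Omega(n^2/\log n)$ when the number of agents is an arbitrary constant.

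Concretely, I would fix a constant $k$ and argue by contradiction: suppose every $2n$-vertex member of the family admits a $k$-agent exploration schedule of length $t(n)=o(n^2/\log n)$. Lemma~\ref{lem:multi} then converts this to a single-agent schedule of length $O\bigl((t(n)+n)\,k\log n\bigr)$. Since $k$ is constant, this simplifies to $O(t(n)\log n)+O(n\log n)=o(n^2)+O(n\log n)=o(n^2)$, contradicting Lemma~\ref{lem:badGraph}.

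The argument is essentially mechanical once both lemmas are in hand, so I do not expect a real obstacle. The only detail worth checking is that the additive $n$ inside $(t+n)$ in Lemma~\ref{lem:multi} does not spoil the conclusion, which it does not because $n\log n=o(n^2)$. It is also worth noting that the $\log n$ loss in the exponent of the corollary is exactly the $\log n$ factor introduced by the repeated-phase simulation in Lemma~\ref{lem:multi}, so the corollary's bound is as tight as the multi-to-single reduction allows.
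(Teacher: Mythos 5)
Your argument is correct and is exactly the route the paper intends: the paper gives no explicit proof but states that Lemmas~\ref{lem:multi} and~\ref{lem:badGraph} imply the corollary, which is precisely your contradiction argument combining the $\Omega(n^2)$ single-agent lower bound with the $O((t+n)k\log n)$ multi-to-single reduction. Your closing remarks about the additive $n$ term and the source of the $\log n$ loss are accurate.
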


\begin{proof}
Assume for a contradiction that the corollary does not hold. Then there is a
schedule for $k$ agents $A_1,\ldots,A_k$ using $t=o(n^2/k)$ time steps to explore 
the graph described in the proof of Lemma~\ref{lem:badGraph}. We can
build a schedule for one agent $A$ as follows: $A$ first behaves as $A_1$.
After $t$ time steps, $A$ moves to the start vertex
(Lemma~\ref{lem:reachability}) and waits further $O(n)$ time steps
until vertex $c_0$ becomes the center again. Now $A$ behaves as $A_2$---note
that the edges are now present in the next $t$ time steps as in the first $t$ time steps.
After $tk+O(kn)=o(n^2)$ time steps, $A$ has explored everything; a contradiction
to Lemma~\ref{lem:badGraph}.
\qed
\end{proof}

The underlying graph of the temporal graph constructed in the proof of
Lemma~\ref{lem:badGraph} has maximum degree $|V|-1$. For graphs
with maximum degree bounded by $d$, we can %only 
show a lower
bound of $\Omega(dn)$ on the exploration time.
In the following proposition,
we present the lower bound construction in a form that we will reuse
later in the proof of Theorem~\ref{thm:3.6}, including
the property that a certain subset of the vertices can be
explored quickly, which will be needed there.
Afterwards, we will
state the lower bound in a simpler form as Lemma~\ref{lem:badGraph-d}.

\begin{proposition}\label{prop:badGraph-d}
For every even $d\ge 4$,
there is an infinite family of temporal graphs with underlying
graphs of maximum degree~$d$ that, for every integer $g\ge
1$, contains a temporal graph $\calG$ that has $g(d-1)+1=\Theta(gd)$ vertices and 
requires $\Omega(gd^2)$ time steps to be explored.
That graph contains $g(d/2-1)+1$ vertices, called \emph{leaf vertices}, such that
moving from one of them to any other takes at least $d/2$
time steps. The remaining $gd/2$ vertices are called \emph{center vertices}
and have the property that, starting at an arbitrary vertex of the graph,
all center vertices can be explored in at most $g(2d-1)$ time steps.
\end{proposition}

\begin{proof}
Let $g\ge 1$ be given.
We construct the temporal graph $\calG$ with $g(d-1)+1$ vertices in two steps.
First, we take $g$ copies of a
temporal graph
$\calG'$, which we connect in the end. $\calG'$~is the graph with $d$
vertices constructed as in the proof of Lemma~\ref{lem:badGraph} (by setting
the $r$ in Lemma~\ref{lem:badGraph}
to $d/2$). Note that moving from
a vertex $\ell_j$ in a copy of $\calG'$ to a vertex $\ell_k$ for $k\neq j$
in the same copy of $\calG'$ requires $\Omega(d)$ time steps.

\begin{figure}[b!]
\centerline{%
\scalebox{0.90}{\includegraphics{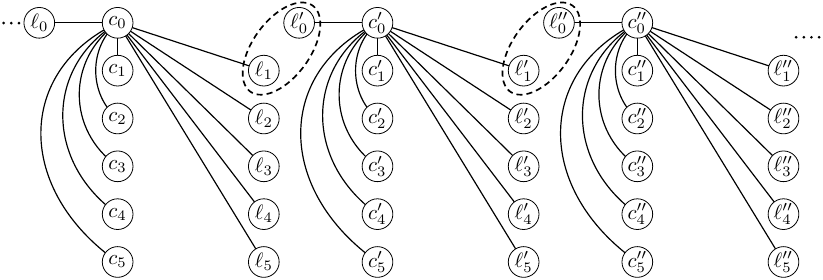}}
}%
\caption{A sketch of the graph of the first time step constructed in the proof of
Proposition~\ref{prop:badGraph-d}. Vertices enclosed by a dashed line are merged to
one vertex.}
\label{fig:prop33}
\end{figure}%
Let $\calG_1,\ldots,\calG_{g}$ be the $g$ copies of $\calG'$. For all
$i=1,\ldots,g-1$, connect $\calG_i$ and $\calG_{i+1}$ by {\em merging} vertex
$\ell_1$ of $\calG_i$ with $\ell_0$ of $\calG_{i+1}$, i.e., by replacing $\ell_1$
and $\ell_0$ by a new vertex that has the neighbors of both $\ell_1$
and $\ell_0$. Let $\calG$ be the
temporal graph obtained (see Fig.~\ref{fig:prop33} for a sketch of the
first time step of $\calG$).
Note that the underlying graph of $\calG$ has maximum degree $d$: The vertices
that have been merged have degree~$d$, all other vertices 
$\ell_j$ have degree $d/2$, and all vertices $c_j$ have degree $d-1$). 
The vertices $\ell_j$ (including the merged vertices) are the leaf vertices,
and the vertices $c_j$ are the center vertices.
By our way of merging, $\calG$
is connected at all times as this is true for all copies of $\calG'$.
Furthermore, we observe that $\calG$ has $g(d-1)+1$ vertices, because
it has $gd/2$ vertices $c_j$ and
$g \cdot (d/2-1) +1$ vertices $\ell_j$,
where the `+1' arises from $\ell_0$ in $\calG_1$ and $\ell_1$ in $\calG_{g}$
not being merged.

Let us consider an exploration schedule of $\calG$. 
By the arguments used in the proof of Lemma~\ref{lem:badGraph}, we can now
observe that
getting from any $\ell_i$ in one copy of $\calG'$ to a different 
vertex $\ell_j$ in the same or another copy of $\calG'$ takes at least $d/2$
time steps (in many of these, the agent may not move). As there are at least
$g \cdot (d/2-1) = \Omega(gd)$ (recall that $d\ge 4$) such consecutive pairs (ignoring the
center vertices) in every exploration schedule of
$\calG$, we need $\Omega(gd^2)$ time steps in total.

Finally, consider the exploration of the center vertices.
As the graph has $g(d-1)+1$ vertices, we can,
starting from an arbitrary vertex at an arbitrary time~$t$, move
in at most $g(d-1)$ time steps
(by Lemma~\ref{lem:reachability})
to the center vertex in the $g$-th copy of $\calG'$ that is
the center of the star in that copy in time step $t+g(d-1)-1$.
From time step $t+g(d-1)$ onward, we can in each step move to the new
center of the star in that copy, thus visiting all center vertices
in that copy in $d/2-1$ time steps. Then we wait $d/2-1$ time steps
until the current vertex is again the center of the star.
In that time step we
move to the vertex that is the result of merging $\ell_1$
in the $(g-1)$-th copy with $\ell_0$ in the $g$-th copy,
and in the next time step we move from that vertex to
the current center in the star of the $(g-1)$-th copy of $\calG'$.
At the start of time step $t+g(d-1)+d$, we are at the center vertex
of the $(g-1)$-th copy of $\calG'$ that has been the center of
the star in that copy in the time step just before. Thus, we can
repeat the procedure and explore the $(g-1)$-th copy, the
$(g-2)$-th copy, etc., in $d$ steps per copy. We complete
the exploration of all center vertices in all copies before time
step $t+g(d-1)+gd=t+g(2d-1)$.
\qed
\end{proof}

\begin{lemma}\label{lem:badGraph-d}
For every $d\ge 2$,
there is an infinite family of temporal graphs with underlying
graphs of maximum degree at most~$d$ that require $\Omega(dn)$
time steps to be explored, where $n$ is the number of vertices
of the graph.
\end{lemma}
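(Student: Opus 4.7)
The plan is to generalize Lemma~\ref{lem:badGraph} by replacing the single rotating star on all vertices with a collection of $n/d$ small rotating stars of size $d$, linked together by a static path of ``port'' edges that preserves connectivity without inflating the degree.

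Concretely, I would partition the $n$ vertices into $k=n/d$ blocks $B_0,\dots,B_{k-1}$ and, inside each $B_j$, designate $d/2$ vertices as \emph{candidates} $c_{j,0},\dots,c_{j,d/2-1}$ and the other $d/2$ as \emph{leaves} $\ell_{j,0},\dots,\ell_{j,d/2-1}$. At each step $i$, within each block I put a star centered at $c_{j,\,i\bmod(d/2)}$ joining it to the other $d-1$ vertices of $B_j$, and I permanently include the port edges $c_{j,0}c_{j+1,0}$. Each snapshot is connected, since every block is a star attached to the port path, and a short count shows the underlying graph has maximum degree $d+O(1)$, which can be brought down to $d$ by rescaling constants.

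For the lower bound I would argue that (a) any exploration must visit $\Omega(n)$ leaves and (b) the gap between two consecutive leaf visits is always $\Omega(d)$, which together force arrival time $\Omega(dn)$. The key observation, paralleling the one in Lemma~\ref{lem:badGraph}, is that a leaf is adjacent in step~$i$ only to the current center $c_{j,\,i\bmod(d/2)}$ of its own block, so to visit a new leaf the agent must occupy a current center of the target block one step earlier. Immediately after any leaf visit the agent sits on a vertex that is out of phase with the rotating centers, and a short modular-arithmetic calculation shows that any continuation---waiting in place, chasing the center inside the same block, or hopping via a port to a neighboring block and then synchronizing there---costs $\Omega(d)$ steps before the agent can again be at a current center.

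The delicate part, I expect, is the different-block case: one has to rule out the strategy of ``parking'' the agent in another block while the rotation of $B_j$ catches up. The case analysis sketched above handles this by showing that each port hop deposits the agent at a non-current-center of the new block and forces a fresh $\Omega(d)$ re-synchronization there, so waiting times in different blocks cannot be overlapped usefully and the per-leaf $\Omega(d)$ cost is paid independently $\Omega(n)$ times.
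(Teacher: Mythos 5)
Your proposal is correct and takes essentially the same approach as the paper: both build $n/d$ copies of the rotating-star gadget of Lemma~\ref{lem:badGraph} scaled down to $d$ vertices, chain them into a connected graph, and charge $\Omega(d)$ re-synchronization steps to each of the $\Omega(n)$ leaf visits. The only real difference is that the paper chains the copies by identifying a leaf of one copy with a leaf of the next rather than adding permanent port edges between centers, which keeps the maximum degree at exactly $d$ and avoids both the degree-$(d+1)$ overshoot and the extra ``parking in another block'' case that your construction then has to argue away.
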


\begin{proof}
If $d\in\{2,3\}$, take $\mathcal{G}$ to be a static path with
$n$ vertices and $n-1$ edges, for any~$n\ge 1$.
Assume $d\ge 4$.
Without loss of generality, we can assume that $d$ is even (otherwise, decrement $d$
by one).
The result then follows by Proposition~\ref{prop:badGraph-d}.
\qed
\end{proof}

In the following, we study the complexity and approximability
of the problem of computing an optimal exploration schedule.
The next three proofs show NP-hardness results and inapproximability
results for \tempex by
reductions from the Hamiltonian $s$-$t$ path
problem, which is \NP-complete even if the input graphs are 
connected, planar and have
maximum degree $3$ as shown by Garey,
Johnson, and Tarjan~\cite{GarJT76}.
Moreover, in the proof of Theorem~\ref{th:deg3hard} we use that their
\NP-completeness proof shows that the problem remains \NP-complete
if we further restrict the graphs such that every Hamiltonian path that starts in $s$
(if the graph contains one) must end in $t$. This follows because their reduction
from 3SAT to Hamiltonian $s$-$t$ path (via
Hamiltonian cycle) only constructs such graphs.
(In their words: ``a Hamiltonian line must either start at $v_{11}$ and finish at
$w_{11}$, or start at $v_{n4}$ and finish at $w_{m6}$.'' Hence, if we fix the
starting point $s$ to $v_{11}$, then every Hamiltonian path starting in $s$, if one
exists, must end in $w_{11}$, and so we can choose $w_{11}$ as $t$.)
We call an instance of the Hamiltonian $s$-$t$ path problem with this property
a \emph{unique destination} instance.

\begin{theorem}
\label{th:deg3hard}%
\tempex on planar graphs of maximum degree $3$ is \NP-hard.
\end{theorem}

\begin{proof}
 We give a reduction from the Hamiltonian $s$-$t$ path problem for
 unique destination instances.
 Let such an instance be given by
 a connected planar graph $G'=(V',E')$ with maximum degree $3$ and
 vertices $s$ and $t$.
 %\ff{such that all Hamiltonian path starting in $s$ are ending in $t$.}
 Take $n'=|V'|$. 
 %Take $G''$ as the graph obtained from $G'$ by adding a new vertex $u$
 %and an edge $\{t,u\}$. 
   Since we can consider $G'$ as a  temporal graph whose edges always
 exist, %and since each exploration schedule that visists no vertex twice must
 an exploration schedule from $s$ with $n'-1$ time steps exists in $G'$ if and only if $G'$
 has a Hamiltonian path from $s$ to $t$. Thus, \tempex on planar graphs of maximum
 degree $3$ is \NP-hard.
\qed
\end{proof}

We remark that temporal graphs whose 
underlying graph has maximum
degree 2 are temporal realizations of paths or cycles.
The exploration of temporal realizations of paths is trivial,
as all edges of the path must exist in all time steps of every
temporal realization
since we assume that the graph is connected in each time step.
We will show in Theorem~\ref{thm:cycle} that temporal realizations
of cycles can be explored with arrival time $O(n)$, and an optimal
exploration schedule can be computed in polynomial time.

\begin{theorem}\label{thm:nonApprox}
Approximating \tempex %temporal graph exploration 
with ratio
$O(n^{1-\varepsilon})$ is \NP-hard for every constant $\varepsilon>0$.
\end{theorem}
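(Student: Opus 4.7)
I would prove Theorem \ref{thm:nonApprox} by a gap-preserving reduction from the $s$-$t$ Hamiltonian Path problem, amplified to produce a near-quadratic gap. Fix $\varepsilon>0$ and pick a large constant $k=k(\varepsilon)$. Given an instance $(H,s,t)$ of $s$-$t$ Hamiltonian Path with $|V(H)|=n$, I construct in polynomial time a temporal graph $\calG$ on $N=\Theta(n^{k})$ vertices such that yes-instances admit an exploration schedule of length $O(N)$, while no-instances force every exploration to length $\Omega(N^{2-1/k})$. Since $N^{1-1/k}>N^{1-\varepsilon}$ once $k>1/\varepsilon$, a polynomial-time $O(N^{1-\varepsilon})$-approximation would decide Hamiltonian Path, contradicting $\poly\ne\NP$.

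The construction takes $m=\lceil n^{k-1}\rceil$ copies $H_1,\dots,H_m$ of $H$, each with designated entry $s_i$ and exit $t_i$, linked into a chain by bridge edges $t_is_{i+1}$. Time is partitioned into $m$ windows of $n-1$ consecutive steps; during the $i$th window the full edge set of $H_i$ is exposed, while every other copy is a rotating star on its vertex set, as in the construction of Lemma \ref{lem:badGraph}, so that moving between two non-centre vertices of an inactive copy costs $\Omega(n)$ steps. Each bridge $t_is_{i+1}$ is present only at the step separating two consecutive windows, and a static spanning backbone through the rotating-star centres of all copies maintains connectivity at every step. For yes-instances the agent simply traverses the Hamiltonian $s_i$-$t_i$ path of $H_i$ during the $i$th window, crosses the bridge in the next step, and proceeds, finishing in $m(n-1)+m=O(N)$ steps.

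For no-instances the key point is that during the $i$th window any walk that visits all $n$ vertices of $H_i$ and ends at $t_i$ would be a Hamiltonian $s_i$-$t_i$ path in $H$, which does not exist. Hence in every window the agent either skips a vertex of $H_i$, forcing a later return through the rotating-star phase at cost $\Omega(n)$ per revisit, or fails to be at $t_i$ at the bridge step, losing $\Omega(nm)$ steps to backbone-star traversal before the bridge cycle comes around again. The main obstacle is ruling out cleverly interleaved schedules that amortise revisits across copies; I would handle this with a potential argument setting $\phi$ to the number of still-unvisited vertices and showing that every decrease of $\phi$ occurring outside an active window costs the agent $\Omega(nm)$ steps. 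Summed over the $\Omega(m)$ copies that must be revisited, the total schedule length becomes $\Omega(N\cdot m)=\Omega(N^{2-1/k})$, which establishes the required hardness.
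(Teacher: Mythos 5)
Your high-level plan is the same as the paper's: a gap reduction from Hamiltonian $s$-$t$ path using polynomially many copies of the instance, one-shot ``quick links''/bridges between consecutive copies that can only all be used if every copy admits a Hamiltonian $s$-$t$ path, and a yes/no gap of $O(N)$ versus $\Omega(N^{2-\varepsilon})$. However, your concrete construction does not deliver the claimed lower bound, and the flaw is exactly at the step you defer to a ``potential argument.'' You place the expensive rotating-star structure \emph{inside} each copy (on $n$ vertices) and connect the copies by a \emph{static} backbone. With a static backbone, inter-copy travel is cheap, and revisiting a missed vertex of an inactive copy costs only $O(n)$ steps (move to the copy's current star centre, wait at most $n$ steps, step to the leaf), not the $\Omega(nm)$ you assert. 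Concretely, in a no-instance an agent can spend each window doing a partial DFS of the exposed $H_i$ (visiting at least $(n+1)/2$ vertices in the $n-1$ static steps), and afterwards sweep the backbone once and clean up the at most $n/2$ leftover vertices per copy through the local rotating stars at $O(n)$ steps each. That is $O(m+mn^2)=O(N^{1+1/k})$ steps in total, which for $k>2$ is strictly below your claimed lower bound of $\Omega(N^{2-1/k})$; the achievable yes/no gap of your gadget is therefore at most $N^{1/k}$, which \emph{shrinks} as $k$ grows --- the opposite of what you need.

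The paper avoids this by inverting the roles: it takes the rotating star of Lemma~\ref{lem:badGraph} at the \emph{global} level, with $n=(n')^c$ leaves, and substitutes one copy of $G'$ (whose internal edges are present at every step) for each leaf, attaching the copy's vertex $s$ where the leaf was. Then the only way to move between copies, other than a quick link, is to climb to the current global centre and wait up to $n$ steps for the right centre to come around, so every forced inter-copy transition costs $\Theta(n)=\Theta(\mbox{number of copies})$ steps. In a no-instance at least $\Omega(n)$ such transitions are unavoidable, giving $\Omega(n^2)=\Omega((N)^{2c/(c+1)})$ total, while the copy size $n'$ is negligible. If you want to salvage your version, the backbone must itself be the time-varying bottleneck (a single rotating star over the $m$ copies, with no static shortcuts), and the number of copies must be the quantity that is polynomially amplified relative to the copy size.
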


\begin{figure}[b!]
\centerline{%
\scalebox{0.90}{\includegraphics{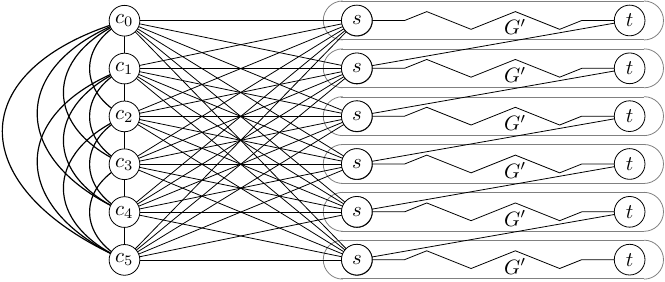}}
}%
\caption{The underlying graph of the temporal graph $\calH$ constructed in the proof
of Theorem~\ref{thm:nonApprox} where the quick links from $t$ in the
$r$th copy of $G'$ are not shown.}
\label{fig:non_approx}
\end{figure}%

\begin{proof}
Assume we are given an instance
$I'$ of the Hamiltonian $s$-$t$ path problem consisting of a connected, undirected $n'$-vertex graph
$G'$, a start vertex $s$, and an end vertex $t$.
We now
construct an instance $I$ of the temporal graph exploration problem as
follows: 
Take the temporal graph $\calG$
as constructed in the proof of Lemma~\ref{lem:badGraph}  with $r=(n')^c$ for a
constant $c$ that we will choose later. In addition, replace each $\ell_{i-1}$
by a copy of $G'$, called the {\em $i$th copy}, for $1\le i\le r$.
The edges in each copy of $G'$ are present in every time step. For all $1\le i,j \le r$, the edge $\{c_{j-1},\ell_{i-1}\}$ is
replaced by an edge connecting $c_{j-1}$ and vertex $s$ in the $i$th copy.
In other words, we identify $s$ in the $i$th copy with $\ell_{i-1}$.
Furthermore, we
call the 
vertices
$c_i$ the {\em center vertices}.
In addition, we add so-called {\em quick links}. Each quick link is an edge that connects the vertex $t$ of the $i$-th copy with the vertex $s$ of the 
$(i+1)$-th copy only in time step $i \cdot n'$, for $1\le i< r$.
There are additional quick links in time step $r\cdot n'$ from vertex $t$ in the $r$-th copy
to %vertex $c_1$
all center vertices.
Denote by $\calH$ the resulting temporal graph. Its underlying graph
is illustrated in Figure~\ref{fig:non_approx}. Note that $\calH$ has $n=r (1+n')$ vertices
and that $r=\Theta(n^{c/(c+1)})$. 
Since $\calG$ is connected in each time step and each copy of $G'$ is
connected and present in each time step, $\calH$~is also connected in each
time step.
The start vertex of the exploration is set to be $c_0$.

Clearly, if $G'$ has a Hamiltonian path from $s$ to $t$, then $\calH$ can be
explored in %linearly 
$n$
time steps: The agent starts at $c_0$ and then explores the
first copy of $G'$ in $n'$ time steps by following the Hamiltonian $s$-$t$-path.
The agent arrives at $t$ in the first copy of                                        
$G'$ at the start of time step~$n'$, and we can use
the unique quick link present in time step $n'$ to move to $s$ in the second copy
of $G'$, etc. After exploring all $r$ copies of~$G'$,
the agent is at $t$ in the $r$-th copy of $G'$
at the start of time step $r\cdot n'$. Let $c_{i'}$
he the vertex that is the center of the star in time step $r\cdot n'+1$.
The agent moves from $t$ in the $r$-th copy of $G'$ to $c_{i'}$
via the quick link that is present between these two vertices in
time step $r\cdot n'$. After that, the agent can
explore all remaining center vertices $c_{i''}$ in %$O(n)$ 
$r$ time steps,
i.e.,
$\calH$ can be explored %in~$O(n^*)$~steps.
in $n$ time steps---note that $c_0$ may be visited twice.

Now assume that $G'$ does not have a Hamiltonian $s$-$t$-path.
This means that no copy of $G'$ can be explored entirely in one visit while using
a quick link to enter the copy and another quick link to exit it. If the exploration
schedule enters and exits a copy of $G'$ via quick links, it must enter that
copy at least one more time to explore its remaining vertices.
We say that the exploration schedule enters a copy of $G'$ \emph{via a center
vertex} if it traverses the edge from a center vertex $c_i$ to the $s$ vertex
in that copy of $G'$, and we say that it exits the copy of $G'$ via a center
vertex if it traverses the edge from the $s$ vertex in that copy to some
center vertex $c_i$.
Ignoring the last copy of $G'$ (in the order in which the copies of $G'$ are explored)
as well as the copy that is entered or exited via a quick link in time step $r \cdot n'$,
we therefore have that each of the remaining $r-2$ copies of $G'$ is entered
or exited (or both) at least once via a center vertex. 
Whenever a copy 
of $G'$ is exited via a center vertex, the exploration
schedule requires $r$ time steps to visit another (or the same) copy of $G'$,
by the argument in the proof of Lemma~\ref{lem:badGraph}. Whenever
a copy of $G'$ is entered via a center vertex and is not the first copy
visited, the previously visited copy of $G'$ must have been exited via a center vertex,
except possibly in the case where 
a quick link from that copy to a center vertex was
used. The latter can happen only once, namely at time $r\cdot n'$.

Let $k$ be the number of copies of $G'$ that are entered via a center
vertex, and $l$ the number of copies that are exited via a center vertex.
By the discussion above, we have $k+l\ge r-2$ and $l\ge k-2$,
which together imply $(l+2)+l \ge r-2$ and thus $l\ge r/2-2$.
Perhaps, after the last exit of a copy there is no reenter.
Thus, for our $r$ copies of $G'$, it happens at least $r/2-3$
times that the exploration schedule moves from a copy of $G'$ to another
copy of $G'$ (or back to the same copy) via a center vertex.
As each such move requires at least $r$ time steps,
the total number of time steps in the exploration
schedule is at least $r(r/2-3)$.
So a total of at least $\Omega(r^2)=
\Omega(n^{2c/(c+1)}) =\Omega(n^{2-\varepsilon})$ time steps
are needed, where $\varepsilon=2/(c+1)$ can be made arbitrarily small
by choosing $c$ large enough.

Distinguishing whether $\calH$ can be explored in
$n$ time steps
 or whether
it requires
$\Omega(n^{2-\varepsilon})$ time steps therefore solves the Hamiltonian
$s$-$t$-path problem, and the theorem follows.
\qed
\end{proof}

\begin{theorem}\label{thm:3.6}
For all $\epsilon, \delta >0$,
approximating \tempex %temporal graph exploration 
with ratio
$O(\Delta^{1-\varepsilon})$ is \NP-hard even if the underlying graphs have 
maximum
degree 
at most 
$\Delta=\Omega(n^{\delta})$
with 
$n$ being the number of vertices of the temporal graph.
\end{theorem}

\begin{proof}
Choosing our graphs large enough, we assume without loss of generality that
$\Delta\ge 8$.
Recall that the %We already uThe
 Hamiltonian $s$-$t$ path problem remains \NP-hard if the graphs have
 maximum degree $3$. % as shown by \cite{GarJT76}. 
Let $I'$ be
an instance of the Hamiltonian $s$-$t$ path
 problem consisting of a graph $G'=(V',E')$ with maximum degree $3$ 
and vertices $s$ and $t$ in $V'$ such that $|V'|\ge 72$.
If $n'=|V'|$ is not a multiple of $\Delta-4$, then
modify $I'$ by adding a path of new vertices $v_1,\ldots,v_i$ ($i=-n'
 \,{\mathrm{mod}}\, (\Delta-4)$) and by 
connecting $v_i$ with $t$ and asking for a Hamiltonian $s$-$v_1$ path.
In this case, we rename the original vertex $t$ to $t'$, and rename vertex $v_1$ to $t$.
Clearly, a solution to the modified instance can be easily turned into
a solution of the original instance, and vice versa. 
Thus, we can assume 
without loss of generality that
$n'$ is a multiple of $\Delta-4$
and all vertices except one have degree $\le
 3$ and one vertex $t'\neq t$ has degree $\le 4$.

The idea of the remainder of the proof is to proceed along similar lines as in the proof of Theorem~\ref{thm:nonApprox},
but start with the temporal graphs provided by Proposition~\ref{prop:badGraph-d}
instead of those provided by Lemma~\ref{lem:badGraph}.

First, apply Proposition~\ref{prop:badGraph-d} with $d=\Delta-4$
and $g=2(n')^c/d$ to obtain a temporal graph $\calG'$,
for some integer $c\in \Nat$ whose choice will be discussed later.
Note that $g$ is an integer as $d=\Delta-4$ divides~$n'$.
$\calG'$~has $q=gd/2 = (n')^c\ge 72$ center vertices
and $r=g(d/2-1)+1 = q(d-2)/d+1$ leaf vertices.
Now add $r$ disjoint copies of $G'$ to $\calG'$, and
identify the vertex $s$ from each copy with a different
leaf vertex. The edges in these copies of $G'$ are
present in every time step.
The start vertex of the exploration is set to be an
arbitrary center vertex~$c_0$ that is conneced to
some leaf vertex $\ell_0$ in the first time step.
Order the $r$ copies of $G'$ arbitrarily, starting
with the copy whose vertex $s$ has been identified
with $\ell_0$. Denote the copies of $G'$ in this
order by $G'_0$, $G'_1$, \ldots, $G'_{r-1}$.
For $1\le i\le r-1$, add an edge called \emph{quick link}
that connects the vertex $t$ in $G'_{i-1}$ with the vertex $s$ in
$G'_i$ and is present only in time step $i\cdot n'$.
Let $\calG$ denote the resulting temporal graph.
See Fig.~\ref{fig:lem36} for an illustration of the construction.
\begin{figure}[b]
\centerline{%
\scalebox{0.90}{\includegraphics{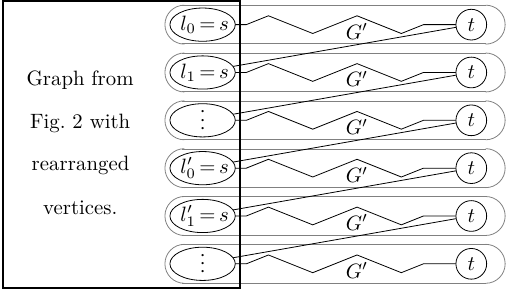}}
}%
\caption{A sketch of the underlying graph constructed in the proof of
Theorem~\ref{thm:3.6}.}
\label{fig:lem36}
\end{figure}

 The temporal graph $\calG$ has $n=q+r n'=q ( 1+n'(d-2)/d + n')$
 vertices, thus $n\le q ( 3 \cdot q^{1/c} )$ and therefore
 $q \ge (n/3)^{c/(c+1)}\ge n^{c/(c+1)}/3$ $(\star)$.
Moreover, it is easy to see that the underlying graph of $\calG$ has maximum degree $\Delta$:
Center vertices have degree at most $d$. Consider a copy of $G'$:
Vertex $s$ has at most $d=\Delta-4$ edges to center vertices, one
quick link,
and $3$ edges to vertices in the same copy; vertex $t'$ and $t$ have
degree at most 4 due to a possible edge to a new path and a possible quick link, respectively, and all other vertices have degree at most~3.

If $G'$ has a Hamiltonian path from $s$ to $t$, then $\calG$ can be
explored in at most $(r+1)n'+4q\le 4n$ time steps: 
It takes $n'$ time steps to explore the first copy of $G'$ (we move from
$c_0$ to the vertex $s$ in that copy and then follow the Hamiltonian
path, reaching the vertex $t$ in that copy in time step $n'-1$), then another
$n'$ time steps to move via the quick link to the second copy and explore it,
and so on, with $n'$ time steps for each of the $r$ copies of $G'$.
After $r n'$ time steps, we have explored all copies of $G'$ and arrived
at vertex $t$ in the last copy. Now we move back to the vertex $s$
in that copy using at most $n'$ further time steps. It remains to explore the
center vertices, which takes at most $g(2d-1)\le 4q$ time steps
by Proposition~\ref{prop:badGraph-d}.

If $G'$ does not have a Hamiltonian $s$-$t$-path, 
an exploration of the $r$ copies of $G'$ in $\calG$ must
move at least $r/2-3$ times from one copy of $G'$ to another via a center
vertex, using the same argument as in the proof of Theorem~\ref{thm:nonApprox}.
By Proposition~\ref{prop:badGraph-d}, moving from the leaf vertex
in one copy to the leaf vertex of another copy via center vertices
takes at least $d/2$ time steps.
Thus, the exploration of
$\calG$
takes at least $$\frac{d}{2} \cdot (r/2-3) \ge \frac{d}{2} \cdot \frac{q (d-2)/d -6}{2} 
\stackrel{d\ge 3,q\ge 72}{\ge} \frac{d}{2} \cdot  \frac{q}{8}
\stackrel{(\star)}{\ge} \frac{d}{48} n^{c/(c+1)} \ge \frac{d}{48}
n^{1-\gamma}$$ time steps,
 where $\gamma>0$ is chosen below and can be made arbitrarily small by
 making $c$ large enough.
Distinguishing whether $\calG$ can be explored in at most $4n$ time steps or whether
it requires at least $\frac{d}{48} n^{1-\gamma} = \Omega( \Delta
n^{1-\gamma}) = \Omega(  \Delta^{1-\gamma/\delta} n )$ time steps
therefore solves the Hamiltonian
$s$-$t$-path problem, and the theorem follows by taking
$\gamma=\varepsilon \delta$.
\qed
\end{proof}

\section{Restricted Underlying Graphs}
\label{sec:underlying}%
In Section~\ref{sec:general}, we showed that arbitrary temporal
graphs may require $\Omega(n^2)$ time steps to be explored and that it is
\NP-hard to approximate the optimal arrival time of an exploration schedule
within $O(n^{1-\varepsilon})$ for every $\varepsilon>0$. This motivates
us to consider the case where the underlying graph is from a restricted
class of graphs. In particular, the underlying graph of the construction
from Lemma~\ref{lem:badGraph} is dense (it contains $\Omega(n^2)$
edges) and has large maximum degree. For the case of underlying graphs with
degree bound $d$, we could only show that there are graphs that
require $\Omega(dn)$ time steps. It is therefore interesting to consider
cases of underlying graphs that are sparse, or have bounded degree,
or are planar. We consider several such cases in this section.
As before, we assume for all temporal graphs under consideration
that the graph in each time step is connected and that the lifetime
is at least $n^2$, where $n$ is the number of nodes of the temporal
graph. Furthermore, we still assume that full knowledge about
the graphs in all time steps of the given temporal graph is
available to an algorithm.

\subsection{Lower Bound for Planar Bounded-Degree Graphs}
First, we show that even the restriction to underlying graphs that
are planar and have bounded degree is not sufficient to ensure the
existence of an exploration schedule with %$O(n)$ 
a linear number of
time steps.

\begin{theorem}
\label{th:boundeddeg}%
Even if the underlying graph $G=(V,E)$ of a temporal graph $\calG$ is planar with 
maximum degree $4$
and
the graph $G_i$ in every time step $i\ge 0$ is a simple path, an optimal 
exploration can take $\Omega(n\log n)$ time steps,
where $n=|V|$. %is the number of nodes \ff{of $G$.}
\end{theorem}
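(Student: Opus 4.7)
The plan is to exhibit an explicit temporal graph $\calG$ on $n$ vertices whose underlying graph is planar with maximum degree $4$, such that every $G_i$ is a simple path on $V$ and every exploration schedule has length $\Omega(n\log n)$. Observe first that, because the temporal graph must be connected at every step and each $G_i$ is a simple path, every $G_i$ is forced to be a Hamiltonian path on $V$. A natural candidate for the underlying graph is a $2\times(n/2)$ ladder, or a mild variant of it: ladders are planar, have maximum degree $3$, and admit a rich family of Hamiltonian paths via different zigzag patterns, giving enough flexibility to encode the required temporal behaviour.

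I would construct $\calG$ by recursive doubling. Let $\calG_n$ be formed from two copies of $\calG_{n/2}$ (the inductive bad instance on $n/2$ vertices) joined by a single bridge edge; this preserves planarity and keeps the degree bound. The Hamiltonian path for $\calG_n$ at each step is obtained by splicing, through the bridge vertex, a Hamiltonian path of the left copy with a Hamiltonian path of the right copy, so that each half simultaneously runs the schedule prescribed for its own sub-instance. Furthermore, the bridge is made to appear in the spliced path only at a sparse, carefully chosen set of time steps, so that crossings between the two halves are rare and must be aligned with the internal schedules.

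For the lower bound I would aim for the recurrence $T(n)\geq 2T(n/2)+\Omega(n)$, which solves to $T(n)=\Omega(n\log n)$. The $2T(n/2)$ term is meant to follow from the fact that, since the agent occupies exactly one vertex at each instant and only one side of the bridge at a time, the subsequence of moves it performs inside either copy must by itself constitute a valid exploration of an embedded $\calG_{n/2}$, which by induction costs at least $T(n/2)$. The additive $\Omega(n)$ term is to come from the sparse bridge schedule, which forces the agent to accumulate $\Omega(n)$ wasted steps either in waiting for a bridge opening or in maneuvering to the bridge vertex at the right time.

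The main obstacle is securing this additive $\Omega(n)$ term. A single bridge edge naively costs only $O(1)$ per traversal, and any waiting at the bridge could in principle be absorbed into the internal exploration time, collapsing the recurrence to $T(n)=O(n)$. To prevent this, the construction must arrange that the bridge vertex lies at the ``far end'' of each sub-instance's temporal structure, so that getting to it from the starting position of the sub-instance is itself an $\Omega(n)$-cost event per recursion level, and must choose the bridge-active steps so that they do not coincide with natural visits to the bridge vertex during the internal schedule. A potential-function or adversary argument tracking, for each half at each time, the agent's progress on that side together with its distance to the bridge vertex and the elapsed time since the last bridge opening, should then certify the $\Omega(n)$ overhead per recursion level. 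A final technical check is to confirm that splicing Hamiltonian paths at level $n$ does not introduce shortcuts unavailable at level $n/2$ and that the underlying graph after all levels of recursion remains planar of maximum degree $4$, both of which should be immediate from the construction.
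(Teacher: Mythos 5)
There is a genuine gap, and it sits exactly where you flag it yourself: the additive $\Omega(n)$ term in the recurrence $T(n)\ge 2T(n/2)+\Omega(n)$ is never established, and the construction as described cannot supply it. Your two requirements on the bridge are mutually inconsistent: since every $G_i$ must be a \emph{connected} simple path on all of $V$, a single bridge edge joining the two copies must be present in \emph{every} step (otherwise $G_i$ splits into two disjoint paths), and an always-present edge costs $O(1)$ to traverse, so no waiting is forced. Conversely, if the bridge appears only at sparse steps, the instance violates the hypotheses of the theorem. Trying instead to charge $\Omega(n)$ to ``maneuvering to the bridge vertex'' double-counts against the $2T(n/2)$ term, which you are simultaneously charging for movement inside each copy. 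There is also a secondary gap in the $2T(n/2)$ term itself: the induced walk inside a copy starts at an arbitrary time and an arbitrary entry vertex, so you would need the sub-instance's lower bound to be robust to time shifts and to all start vertices, which your induction does not set up.

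The paper's construction resolves both difficulties differently. The underlying graph is a $2\times(n/4)$ ladder (rows $t_0,\dots,t_{n/4-1}$ and $b_0,\dots,b_{n/4-1}$, with both horizontal and cross edges available in each column) whose left ends $t_0,b_0$ are joined by an \emph{always-present path $P$ of $n/2$ extra vertices}; each $G_i$ is the Hamiltonian path running out along one row, through $P$, and back along the other. Crossing between the ``$t_0$-side'' and the ``$b_0$-side'' thus costs $n/2$ steps not because an edge is rare but because the connector is itself $\Theta(n)$ vertices long. Time is divided into rounds of $n/2$ steps; in each round the adversary replaces the horizontal edges of the middle column of every maximal stretch by cross edges, which re-partitions the still-unvisited ladder vertices so that half are reachable from $t_0$ and half from $b_0$. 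In one round the agent can clear only one side, so $\Omega(\log n)$ rounds of $n/2$ steps each are needed. This ``re-split the unexplored set every round'' argument replaces your recursion on disjoint subgraphs and is what actually delivers the $\Omega(n)$ cost per level.
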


\begin{proof}
Without loss of generality, we assume that $n=2^k$ for some $k\ge 3$.
Consider the following underlying graph $G$: It contains
vertices $V_0=\{t_i,b_i\mid 0\le i\le n/4-1\}$, the
edges $\{t_i,t_{i+1}\}, \{b_i,b_{i+1}\},\{t_i,b_{i+1}\}$ and $\{b_i,t_{i+1}\}$ for
$0\le i < n/4-1$, and a path $P$ of $n/2$ additional vertices
that connects $t_0$ and $b_0$---$P$ ensures the connectedness of $\calG$.
It is not hard to see that $G$ is planar:
Arrange the vertices as in Figure~\ref{fig:boundeddeg}.
For each $0\le i < n/4 -1$, draw the edge $\{b_i,t_{i+1}\}$
as shown in the figure and the edge $\{t_i,b_{i+1}\}$ around the outside.
We refer to the edges $\{t_{i-1},t_{i}\}$ and $\{b_{i-1},b_{i}\}$ as \emph{horizontal}
edges of column $i$, and the edges $\{t_{i-1},b_{i}\}$ and $\{b_{i-1},t_{i}\}$
as \emph{cross} edges of column $i$.

\begin{figure}[b!]%[tbhp]
\centerline{\scalebox{0.90}{\includegraphics{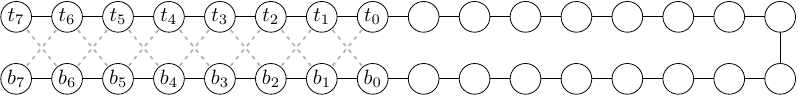}}}\vspace{4mm}
\centerline{\scalebox{0.90}{\includegraphics{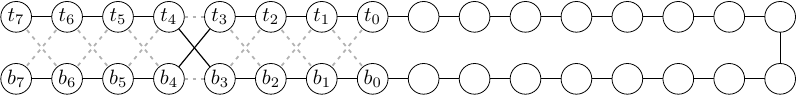}}}\vspace{4mm}
\centerline{\scalebox{0.90}{\includegraphics{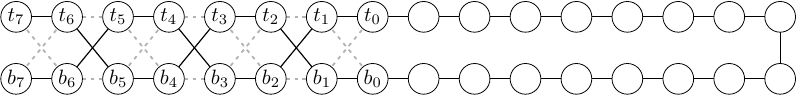}}}\vspace{4mm}
\centerline{\scalebox{0.90}{\includegraphics{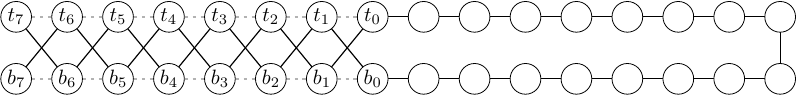}}}
\caption{The graphs for different time steps constructed in the proof of Theorem~\ref{th:boundeddeg} for $n=32$. 
The topmost picture shows the edges present in the first round, the next
picture shows the edges present in the second round, etc. The remaining
edges of the underlying
graph are drawn dashed.}
\label{fig:boundeddeg}
\end{figure}%

Consider the following temporal realization of $G$:
The path $P$ is always present. We divide the time into {\em rounds},
each consisting of $n/2$ time steps.
The
first round consists of the first $n/2$ time steps, etc.
For the first %$n/2$ time steps
round, the graph additionally contains the horizontal
edges of all columns.
For the next %$n/2$ time steps
round, the horizontal edges of column $n/8$ are replaced by the cross edges.
For the next 
round, the horizontal edges of columns $n/16$ and $3n/16$ are replaced by the cross edges.
Following the same pattern of replacements (each 
time the horizontal edges of the middle column in each stretch of horizontal edges are replaced by the cross 
edges), this is repeated 
for $O(\log n)$ %times
rounds. 

For $i\ge 1$, let $G_i$ be the graph of the $n/2$ time steps of round~$i$
that is induced by $V_0$,
and let $T_i$ and $B_i$ denote the vertices of $V_0$ that are connected
to $t_0$ and to $b_0$, respectively, in $G_i$.  %$G_i\setminus P$.
Observe that in the $n/2$ time steps of round $i$, an %algorithm 
agent can visit either
vertices in $T_i$ or vertices in $B_i$, as it takes more than $n/2$ time steps
to travel from $t_0$ to $b_0$ or vice versa.
In particular, after round 1 either $T_1$ or $B_1$ is entirely
unvisited. Let $U_1$ be that unvisited set ($U_1=T_1$ or $U_1=B_1$).
We have $|U_1|=n/4$.
Observe that in round~2, half the vertices of $U_1$ are in $T_2$
and the other half in $B_2$. If the agent visits vertices of
$T_2$ in round~2, let $U_2=U_1\cap B_2$; otherwise, let
$U_2=U_1\cap T_2$. If we continue in the same way, after $i$ rounds
$U_i$ is a set containing $|U_1|/2^{i-1}=n/2^{i+1}$ unvisited
vertices. Thus, no matter what the start position of the
agent is, $\Omega(\log n)$ rounds of $\Theta(n)$ time steps each are required until all
vertices are visited.
\qed
\end{proof}

\subsection{Underlying Graphs with Small Separators}
In this section we consider underlying graphs that can be divided into
parts in such a way that the parts are small and are connected to the
rest of the graph via a small set of so-called boundary vertices.
The formal definition is as follows:

\begin{definition}[$(r,b)$-division]
For positive integers $r$ and $b$ (which might be functions of~$n$),
an \emph{$(r,b)$-division} of a graph $G=(V,E)$ with $n$ vertices
is given by a separator $S\subseteq V$ and a partition of $G[V\setminus S]$
into $O(n/r)$ (not necessarily connected) components, each
associated with a \emph{boundary set} consisting of vertices from $S$, such that
the following properties hold:
\begin{enumerate} %[(1)]
\item Each component contains at most $r$ vertices.
\item The boundary set of each component has size at most~$b$.
\item The boundary sets of different components may overlap, and
  the union of the boundary sets of all components is~$S$.
\item Every edge of $G$ that has only one endpoint in a component
  has its other endpoint in the boundary set of that component.
\end{enumerate}
\end{definition}

This definition slightly generalizes the $r$-divisions introduced by
Frederickson~\cite{Fre87} by allowing the choice of the additional
parameter $b$.

\begin{theorem}\label{th:smalldiv}
Every temporal graph whose underlying graph belongs to a class
of graphs that have $(r,b)$-divisions can be explored in
$O((n^2b/r + nrb^2) \log n)$ time steps.
\end{theorem}

\begin{proof}
We first give an exploration schedule using $b$ agents.
The agents explore the $O(n/r)$ components one by one. Consider
the exploration of a component $C$ with boundary set~$B$. We refer
to the vertices in $B$ as \emph{boundary vertices}. First,
we use $O(n)$ time steps to position an agent at each boundary
vertex. Now, as the graph is connected in each time step, we
know that each vertex $v$ in $C$ is connected to some boundary
vertex $w\in B$ in each time step, meaning that there is a path
from $v$ to $w$ all of whose internal vertices are in~$C$.
Therefore, by the pigeonhole principle,
for a vertex $v$ in $C$, we have that in every period of $2rb$
time steps, there exists a vertex $w\in B$ such that $v$ is connected
to $w$ in at least $2r$ of these $2rb$ time steps. By Lemma~\ref{lem:reachability},
applied with $H$ as the subgraph of $G$ induced by $w$ and the
vertices of $C$,
the agent from $w$ can visit $v$ and return to $w$ during these
$2r$ time steps, ignoring the other time steps in the period of $2rb$ time steps.
Thus, the agents can visit all up to $r$ vertices of $C$
in $2r^2b$ time steps. Therefore, the vertices of $C$ and its boundary
set are explored in $O(n+r^2b)$ time steps, and handling all $O(n/r)$
components one after the other takes $O(n(n+r^2b)/r)
=O(n^2/r + nrb)$ time steps.
Finally, we can apply Lemma~\ref{lem:multi} and obtain an
exploration schedule with a single agent that uses
$O((n^2/r + nrb)b \log n)$ time steps.
\qed
\end{proof}

The following lemma will allow us to apply
Theorem~\ref{th:smalldiv} to graphs with bounded treewidth.

\begin{lemma}\label{lem:treewidth-division}
Graphs with treewidth at most~$k$
admit a $(2\sqrt{n/k},6k)$-division.
\end{lemma}

\begin{proof}
Let $G$ be a graph of treewidth at most~$k$.
Consider a nice tree decomposition~\cite{Klo94,Sch94} of width $k$ for $G$,
i.e., the tree of the tree decomposition is a binary tree and
all inner nodes are so-called {\em join nodes}, {\em introduce nodes}, or {\em forget nodes}.
The bag of a join node contains the
same vertices as the bags of the two children of the join node.
Select bags as separators via the following
procedure: Visit the bags in a post-order traversal of the tree.
Select a bag $B$ as a {\em separator} if the number of unmarked vertices
in the bag $B$ and in bags below $B$ is at least
$\sqrt{n/k}$, or if the number of selected bags that are below $B$ and are not
descendants of another selected bag is at least~$2$. If a bag $B$ is selected, 
let the unmarked vertices that are in bags below $B$ but not in $B$ form a new component,
and mark all vertices in $B$ and below $B$.

The number of bags selected as separators is $O(\sqrt{nk})$. This can be shown
as follows. At any point of the procedure, call a selected bag a \emph{topmost} bag
if it is not a descendant of another selected bag.
If a bag is selected because there are at least $\sqrt{n/k}$
unmarked vertices below, the number of topmost bags increases by at most
one and $\sqrt{n/k}$ unmarked vertices become marked. This
can happen at most $\sqrt{nk}$ times. If a bag is selected because there
are two topmost bags below it, the number of topmost bags decreases
by one. As the number of topmost bags increases by one at most
$\sqrt{nk}$ times, it can also decrease at most $\sqrt{nk}$ times,
and hence at most $\sqrt{nk}$ bags are selected because there are
two topmost selected bags immediately below them.

As we have a binary tree decomposition, the left and right
subtree of a join node whose bag is chosen as separator can have at
most $\sqrt{n/k}-1$ unmarked vertices each, so the join
node whose bag is chosen as separator could have up to $2\sqrt{n/k}-2$ unmarked
vertices below it. When the bag of an introduce or forget node is chosen
as separator, there can be at most $\sqrt{n/k}-1$ unmarked vertices below
it. As a consequence,
the procedure splits
the graph into a separator set $S$ (the union of all bags selected
as separators) and $O(\sqrt{nk})$ components (that are not necessarily
connected) such that
each component contains at most $2\sqrt{n/k}-2$ vertices (not counting
separators). The boundary set of each component is taken to be the
union of the at most three selected bags that separate the component
from the rest of the graph: The bag that was selected when the component
was formed, and the one or two topmost bags in the subtree below that
bag. Thus, the boundary set of each component contains at most $3(k+1)\le 6k$
vertices.
\qed
\end{proof}

\begin{corollary}\label{cor:treewidth}
Every temporal graph whose underlying graph has treewidth at most~$k$
can be explored in $O(n^{1.5} k^{1.5}\log n)$ time steps.
\end{corollary}

\begin{proof}
By Lemma~\ref{lem:treewidth-division}, graphs with treewidth at most~$k$
admit a $(2\sqrt{n/k},6k)$-division.
By Theorem~\ref{th:smalldiv}, applied with $r=2\sqrt{n/k}$ and
$b=6k$, the exploration time for temporal graphs whose underlying
graph has treewidth at most $k$ is then
$O((n^2b/r + nrb^2) \log n)
=O(n^{1.5} k^{1.5} \log n)$.
\qed
\end{proof}

\begin{corollary}\label{cor:planar}
Every temporal graph whose underlying graph is planar
can be explored in $O(n^{1.8} \log n)$ time steps.
\end{corollary}

\begin{proof}
Using the planar separators introduced by Lipton and Tarjan~\cite{LT79},
Frederickson proved that planar graphs with $n$ vertices have
$(r,O(\sqrt{r}))$-divisions~\cite{Fre87} for
every $1\le r\le n$.
Choosing $r=n^{0.4}$, we can apply 
Theorem~\ref{th:smalldiv} with $r=n^{0.4}$ and $b=n^{0.2}$
and obtain that
the exploration time for temporal graphs whose underlying
graph is planar is
$O((n^2b/r + nrb^2) \log n)
=O(n^{1.8} \log n)$.
\qed
\end{proof}
\subsection{Cycles and Cycles with Chords}

\begin{theorem}\label{thm:cycle}
Every temporal cycle $\calC$ of length $n$ can be explored in %$3n$ steps
at most $2n-2$ time steps, and
a schedule using this many time steps can be
computed in time linear in the total size of the graphs of the first
$2n-2$ time steps, i.e., in $O(n^2)$ time. If additionally an array 
$A:\{1,\ldots,2n-2\}\rightarrow E$ is
given that stores in $A[t]$ the edge that
is missing in time step $t$, if any, then the running-time can be improved
to $O(n)$. Moreover,
an optimal schedule for exploring a temporal cycle can be
computed in polynomial time.
\end{theorem}

\begin{proof}
We start by showing that $2n-2$ time steps suffice to explore every temporal cycle
of length~$n$.
The exploration schedule is constructed in two phases.
In the first phase, % consisting of $n-2$ steps, 
our goal is to distribute $n$ {\em virtual agents} over the whole
cycle. In detail, move $n$ virtual agents from the start vertex to all
vertices of the cycle such that one virtual agent is on each vertex, its
{\em virtual start vertex}.
By Lemma~\ref{lem:reachability}, this can be done in $n-1$ time steps.

In the second phase, which follows the first phase, 
all virtual agents move in clockwise direction %$d$ 
in each time step. %, except possibly $A$. 
Whenever
a virtual agent cannot move due to a temporal missing edge, that virtual
agent
disappears. 
Note that a temporal missing edge can cause the disappearance of at
most one
virtual agent in each time step. %, which is even true in the case that $A$ exists. 
Therefore, %either $A$ explores everything or
at least one virtual agent
remains after $n-1$ time steps in the second phase. %and such an agent has explored the whole cycle.
The exploration schedule of that virtual agent has explored the whole temporal cycle in at most $2n-2$ time steps.

We can compute a %n optimal 
schedule using $2n-2$ time steps efficiently as follows. 
Consider the second phase and
maintain the set of agents that have not
yet disappeared. For each time step $i=n,\ldots,2n-2$, spend $O(1)$ time to
determine the agent 
that starts $i-n+1$ vertices counterclockwise to the missing edge, i.e.,
determine the agent
that disappears, if any. Finally, take one of the agents that remains in the set and compute 
a schedule for it to reach its virtual start vertex during the first phase. 
If we spend $O(n^2)$ time to iterate over the graphs of the first $2n$ time steps and build
the array
$A$,
then
it is easy to see that the remaining computation can be done in $O(n)$ time.

Finally, we show how to compute an optimal exploration schedule
in polynomial time.
By shortcutting backward and
forward moves of the agents such that no vertices are skipped completely,
every
optimal schedule can be converted into one with the same
arrival time that falls into one of
these %\conf{O(1)} 
types:
move 
clockwise around the cycle; %\full
{ move counter-clockwise
around the cycle;} move clockwise to some vertex $v$, then counter-clockwise
until the cycle is explored; move counter-clockwise to some vertex
$w$, then clockwise until the cycle is explored. The types can
be enumerated in polynomial time, and the optimal schedule for each
type can be calculated in a greedy way. The best of these
schedules can then be output as the optimal exploration schedule for the
given temporal cycle.
\qed
\end{proof}

  \begin{observation}
  For every $n\ge3$, there is a temporal cycle of length $n$
  in which the optimal exploration requires at least $2n-3$ time steps.
  \end{observation}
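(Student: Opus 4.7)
The plan is to construct an $n$-vertex temporal realization $\calC$ of a cycle and show that every exploration of $\calC$ needs at least $2n-3$ steps. I would take the cycle on vertices $v_0,v_1,\ldots,v_{n-1}$ (with edges $\{v_i,v_{i+1}\}$ for $0\le i\le n-2$ and the chord $\{v_0,v_{n-1}\}$), start at $v_0$, and delete exactly one edge per step: the chord $\{v_0,v_{n-1}\}$ in steps $0,1,\ldots,n-3$, and the edge $\{v_{n-2},v_{n-1}\}$ in every step from $n-2$ onward. Each $G_i$ is then a Hamiltonian path, so the connectivity requirement is met and this is a valid temporal cycle.

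The first step of the lower bound is to show that the edge $\{v_{n-2},v_{n-1}\}$ is unusable by the agent: it is present only in the first $n-2$ steps, but during those steps the chord $\{v_0,v_{n-1}\}$ is absent, so any walk from $v_0$ to $v_{n-2}$ must traverse the path $v_0-v_1-\cdots-v_{n-2}$ and therefore needs at least $n-2$ steps; by the time the agent can be at $v_{n-2}$, the edge $\{v_{n-2},v_{n-1}\}$ has already vanished for good. Hence every schedule that visits $v_{n-1}$ does so by being at $v_0$ at some step $t\ge n-2$ and crossing the chord, arriving at $v_{n-1}$ at time $t+1$. From here I would split on whether $v_{n-1}$ is the last vertex visited. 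If yes, then during steps $0,\ldots,t-1$ the agent performs a closed walk in the path $v_0-v_1-\cdots-v_{n-2}$ that visits $v_{n-2}$; any such walk has length at least $2(n-2)$, so $t\ge 2n-4$ and the arrival time is at least $2n-3$. If no, then after reaching $v_{n-1}$ the only available exit is back along the chord to $v_0$, which burns two steps for a single new vertex, and the agent still needs at least $n-2$ more steps from $v_0$ to reach any unvisited vertex on the far end of the path, for a total of at least $2n-2$. The schedule $v_0\to v_1\to\cdots\to v_{n-2}\to v_{n-3}\to\cdots\to v_0\to v_{n-1}$ matches the bound, although the observation only needs the lower bound.

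The main obstacle is the second case, since a priori the agent can interleave partial clockwise progress along the path with a detour through the chord in many orders. The key fact I would rely on is that the set of vertices visited inside the path $v_0-v_1-\cdots-v_{n-2}$ is always a contiguous prefix $\{v_0,\ldots,v_j\}$ (because the agent only moves within this path whenever it is not at $v_{n-1}$, and the chord contributes only the extra vertex $v_{n-1}$). This monotonicity means every detour to $v_{n-1}$ adds nothing to the prefix, so the unvisited vertices remain at the far end of the path and the additive bound $(t+2)+(n-2)\ge 2n-2$ in the second case is immediate, completing the proof.
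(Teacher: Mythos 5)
Your construction is exactly the paper's (with $u=v_0$, $v=v_{n-1}$, $w=v_{n-2}$: the chord absent for the first $n-2$ steps, the edge $\{v_{n-2},v_{n-1}\}$ absent thereafter), and your lower-bound argument correctly fills in the details that the paper's one-line proof sketch leaves implicit. The proposal is correct and takes essentially the same approach as the paper.
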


  \begin{proof} %[\conf{sketch}]
  Assume that $u,v,w$ are three consecutive vertices in this order
  of the cycle and the agent
  is initially at $u$. Let the edge $\{u,v\}$ be absent for the first $n-2$
  time steps, and let the edge $\{v,w\}$ be absent in all time steps after that.
  The agent cannot traverse the edge $\{v,w\}$ as it can reach
  neither $v$ nor $w$ before the edge disappears forever.
  So, the only two candidates for an optimal exploration schedule
  are the following: We can either wait at $u$ until $\{u,v\}$ is available ($n-2$ time steps),
  move to $v$ ($1$~time step) and then walk to $w$ ($n-1$ time steps), giving a total
  of $2n-2$ time steps, or walk to $w$ in $n-2$ time steps and then from $w$ to $v$
  in $n-1$ time steps, giving a total of $2n-3$ time steps.
  \qed
  \end{proof}

A graph is a {\em tree of rings} if it is connected
and all its blocks are cycles. By Lemma~\ref{lem:biconnected},
it follows that temporal graphs whose underlying graph $G$ is
a tree of rings with $n$ nodes can be explored in $O(n)$ time steps
provided that each cycle of $G$ contains at most a constant number
of cut nodes of~$G$.

Next, we show that the addition of a single chord to a cycle
does not destroy the property of admitting an exploration schedule
with $O(n)$ time steps.

\begin{theorem}
\label{th:chord}%
A temporal cycle of length $n$ with one chord can be explored in $O(n)$ time.
\end{theorem}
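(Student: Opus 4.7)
The plan is to extend the two-agent argument used for Theorem~\ref{thm:cycle} to handle the extra chord. Let the chord $\{a,b\}$ partition the cycle into arcs $A$ and $B$ of lengths $\alpha$ and $\beta$ with $\alpha+\beta=n$, and assume without loss of generality that $s$ lies on arc $A$ (possibly $s=a$ or $s=b$). The first step is a structural observation: since the underlying graph has $n+1$ edges and $G_i$ must contain at least $n-1$ edges to remain connected, at most two edges can be absent in any step. A short case analysis shows that if two edges are simultaneously absent then either the chord is one of them, or both missing edges are cycle edges lying on different arcs with the chord present (otherwise $G_i$ would disconnect). Consequently, in every step one of the following holds: arc $A$ is entirely present in $G_i$; arc $B$ is entirely present; or each arc has exactly one missing edge and the chord is present.

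The strategy is to launch two virtual agents $X$ and $Y$ at $s$, with $X$ trying to traverse the cycle clockwise and $Y$ counterclockwise, each permitted to take the chord whenever it is at $a$ or $b$ and the chord is present. At each step, if either agent's forward cycle edge is present, that agent advances; otherwise, by the structural observation above, the chord is present and the two missing cycle edges lie on different arcs. In this ``doubly-blocked'' regime, one of the agents would take a short backward detour to the nearest chord endpoint, cross via the chord, and resume traversal on the opposite side of the blocking edge. A potential argument, charging each backward step to the progress made after the chord-crossing, is then intended to bound the total number of backtrack and stall steps by $O(n)$; combined with the at most $2n$ forward moves distributed across both agents, this yields a two-agent schedule of length $O(n)$.

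The main obstacle is showing that the adversary cannot force repeatedly long backtracks by switching the pair of missing edges from step to step; overcoming this requires scheduling the detours carefully so that each backward move is charged to a distinct new vertex visited on the far side of the chord. To pass from a two-agent schedule to a single-agent schedule of length $O(n)$ without paying the $\Theta(\log n)$ overhead of Lemma~\ref{lem:multi}, I would argue, analogously to Theorem~\ref{thm:cycle}, that one of the two agents alone visits every vertex (after possibly bypassing an obstructing edge via the chord and turning around), so that a single agent can simply execute that agent's walk; since the whole temporal graph is presented up front, the algorithm can simulate both walks and output the successful one.
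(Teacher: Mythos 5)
Your structural observation (at most two edges absent per step, and if two are absent then either one is the chord or the two are cycle edges on different arcs with the chord present) is correct and is a reasonable starting point, but the heart of your argument is missing. The entire difficulty of the two-agent-with-detours strategy is the charging argument that bounds the total cost of backtracks, and you do not give it: you describe it as ``intended'' and then explicitly name the adversarial obstacle (the adversary switching the pair of missing edges from step to step to force repeated long backtracks) without resolving it. A backtrack to the nearest chord endpoint can cost $\Theta(n)$ steps, and after crossing the chord the agent may be blocked again after $O(1)$ forward moves; nothing in your sketch guarantees that each backward step is paid for by a distinct newly visited vertex. The final reduction to a single agent is also unjustified: in Theorem~\ref{thm:cycle} one can argue that one of the two agents completes a full traversal, but once detours and chord crossings are interleaved it is no longer clear that either agent alone visits every vertex, so ``output the successful one'' presupposes what needs to be proved.

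The paper avoids all of this with a simple counting argument over a window of $10n$ steps. Either the chord is present in more than $7n$ of these steps, in which case (since connectivity with the chord present forces each arc to miss at most one edge) each of the two smaller cycles formed by an arc plus the chord is a valid temporal cycle during those steps, and one explores the first in $3n$ of them, walks to the other in $n$ steps, and explores it in another $3n$; or the chord is absent in at least $3n$ steps, in which case the outer cycle must be connected in each of those steps and Theorem~\ref{thm:cycle} applies directly. This reduces the whole problem to two black-box invocations of the cycle result and requires no potential function at all. If you want to salvage your route, you would need to actually construct the charging scheme and prove the single-agent claim; as written, the proposal identifies the right difficulties but does not overcome them.
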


\begin{proof}
Let the left and right cycle be the two cycles that contain the chord.
Check how often the chord is present in the first $7n$ time steps.
If the chord is present in at least $5n$ time steps, use $2n$ of these
to explore the (left or right) cycle in which the start node is contained
(which is possible by Theorem~\ref{thm:cycle}), $n$~time steps
to move to the other cycle, and $2n$~time steps to explore that cycle.
Otherwise, there are at least $2n$ time steps
in which the chord is absent and the remaining graph is a cycle
instance. The cycle can be explored in these time steps.
\qed
\end{proof}

We conjecture that Theorem~\ref{th:chord}
can be extended to %any constant number of 
$O(1)$ chords.

\subsection{The $2\times n$ Grid}
In this section, we consider temporal graphs whose underlying
graph is a grid with $2$ rows and $n$ columns.

\begin{theorem}%{(Fast algorithm)}
\label{th:fast}
Every temporal $2\times n$ grid can be explored in $O(n\log n)$ time steps with $4 \log n$
agents.
\end{theorem}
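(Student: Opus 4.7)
My approach is divide-and-conquer on the column index of the $2\times n$ grid, exploiting its cut structure.

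\textbf{Structural observation.} Because the two horizontal edges between any pair of adjacent columns form a 2-edge-cut of the underlying $2\times n$ grid, connectivity of each $G_i$ forces at least one of these two horizontal edges to be present in every step; analogously, every vertex must have at least one incident edge in every step.

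\textbf{Recursive algorithm.} With $4\log n$ agents at the source, split the grid at the middle column $m=\lceil n/2\rceil$ and proceed as follows. First, shuttle all agents from the source to column $m$ in $O(n)$ steps, using the structural observation to advance the frontier one column per step (via the available horizontal edge) and using vertical edges---which local connectivity forces to appear for any vertex that is otherwise about to be cut off---to synchronize agents across the two rows. Second, explore the two $2\times(n/2)$ halves sequentially, reusing the same agents; inductively, each half needs only $4\log(n/2)=4\log n-4$ agents, which is fewer than the $4\log n$ available. The resulting recurrence
\begin{equation*}
T(n) \le 2\,T(n/2) + O(n)
\end{equation*}
resolves to $T(n)=O(n\log n)$, and the agent budget $4\log n$ is consistent with dedicating a constant number of agents to the shuttling step at each of the $\log n$ recursive levels.

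\textbf{Main obstacle.} Two points are delicate. The first is the shuttling subroutine itself: I must show that a team of $\Theta(\log n)$ agents can cover $\Omega(n)$ columns in $O(n)$ steps despite an adversary that keeps selected vertical or horizontal edges absent for long stretches. The plan is a charging argument where every stalled step is charged to an edge appearance forced by the local connectivity of the lagging vertex, and having $\Theta(\log n)$ rather than constantly many agents provides the redundancy needed to absorb the accumulated lag across all $\log n$ recursive levels. The second delicate point is that a subgrid need not itself be connected at every step, so the induction cannot treat each half as a stand-alone temporal graph; I would instead carry out the recursive exploration inside the original temporal graph, permitting agents to make brief detours through the opposite half when needed, and argue that such detours contribute only $O(n)$ extra steps per level so that the above recurrence still absorbs them.
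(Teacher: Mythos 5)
Your proposal correctly identifies the central difficulty---that the temporal subgraph induced on half the columns need not be connected in every step, so the recursion cannot treat each half as a stand-alone instance---but the fix you sketch does not work, and this is precisely where the paper's proof does something genuinely different. Your claim that detours through the opposite half ``contribute only $O(n)$ extra steps per level'' is unsubstantiated and fails in the worst case: the adversary can, for long stretches, remove every vertical edge except one in the far column, so that the induced subgraph on the left half is two disjoint horizontal paths and a single crossing between the two rows costs $\Omega(n)$ steps; the recursive exploration may need many such crossings, so the detour cost per level is not $O(n)$ and the recurrence $T(n)\le 2T(n/2)+O(n)$ is not justified. Similarly, your shuttling step is not a one-column-per-step frontier advance (an agent in the top row cannot cross to the next column if only the bottom horizontal edge is present and vertical edges are absent); the correct tool here is the paper's Reachability Lemma, which moves each agent to its target in $O(n)$ steps along whatever paths exist, and needs no ``redundancy from $\Theta(\log n)$ agents.''

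The paper resolves the connectivity problem with a different mechanism that your proposal lacks: it proves the stronger statement that a subgrid $G''$ can be explored provided every pair of its vertices is connected \emph{within a larger region} $G'$ in each relevant step, and it interleaves the recursion with a boundary sweep. At each level, $4$ agents are placed on the corners of the current half and move inward in parallel along horizontal edges whenever possible, while the remaining $4\log n-4$ agents sit in the shrinking middle region. Every time step is then charged either to a sweep advance (at most $n'/2$ such steps, since each advance consumes a column) or to a step in which the corner agents are stuck, in which case the middle region is connected through the parent region and the recursive exploration makes one step of progress. This ``sweep or recurse'' dichotomy is what bounds the level cost by $T(n'/2)+O(n')$ and is also why exactly $4$ agents are consumed per recursion depth, giving the $4\log n$ bound. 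Without this (or an equivalent substitute), your argument has a genuine gap.
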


\begin{proof}
We show a slightly more general statement. We show that,
if we are given an underlying graph $G'$ being a grid of size $2 \times n'$ 
and a subgrid $G''$ of size $2 \times n''$ of $G'$ such that each pair of
vertices in $G''$ is connected in $G'$ in each time step (i.e., for every
two vertices $u,v$ in $G''$, the graph contains a path from $u$ to
$v$ in $G'$ in each time step), then
$4 \log n'$ agents initially on some vertices of $G''$ can explore 
$G''$ %of size $2 \times n''$ 
in $T(n')=O(n' \log n')$ time. The
theorem follows by taking $G'=G''=G$. See also Figure~\ref{fig:grid}.

\begin{figure}[t!]%[tbhp]
\centerline{\scalebox{0.90}{\includegraphics{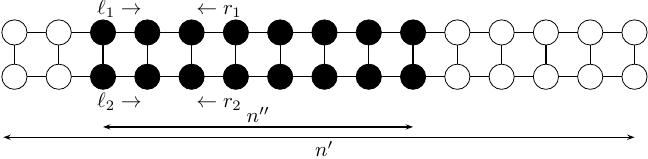}}}
\caption{The situation as described in the proof of Theorem~\ref{th:fast}. A grid $G'$ with a subgrid $G''$ (indicated by the black vertices) and the initial position of the
agents to explore the left half of $G''$.}
\label{fig:grid}
\end{figure}%

We start with exploring the left half $H'$ of $G''$. 
The idea is to move $4$ agents to the corners of $H'$, one to each corner, and 
all remaining $4 (\log n')-4$ agents to 
a suitable {\em middle location} of $H'$---specified below---using the first $2n'$ time steps. 
This is possible by Lemma~\ref{lem:reachability}.
For the next $T(n'/2)+n'/2$ time steps, in each time step where it is
possible, we move the $2$ agents $\ell_1$ and $\ell_2$ on the
left corners of $H'$ in parallel to the right using only horizontal edges. 
Similarly, we move the $2$ agents $r_1$ and $r_2$ on the
right corners to the left in parallel. Let $i$ and $j$ be the number of
actual moves (i.e., the number of time steps during which the agents could move)
of $\ell_1$ and $r_1$, respectively.
The middle location is any position between the final position of $\ell_1$ and
$\ell_2$ on the left and the final position of $r_1$ and $r_2$ on the
right.
If the agents on the left and on the right
meet, they stop moving and $H'$ is explored.
In particular, if $H'$ is a $2\times 1$ grid, $\ell_1$ and $r_1$ (as well as
$\ell_2$ and $r_2$) are at the same vertex, i.e., we can stop immediately and
$T(1)=O(1)$.
Otherwise, we have $i+j<n''/2\le n'/2$ and in the same $T(n'/2)+n'/2$ time steps where the $4$ agents try to move, we explore recursively
the subgrid $H''$ of $H'$ consisting of the columns 
that are not visited
by the $4$ corner agents.
More precisely, there are at least $T(n'/2)+n'/2-i-j\ge T(n'/2)$ time steps in which
 neither 
the $2$ agents $\ell_1$ and $\ell_2$ nor the $2$ agents $r_1$ and $r_2$ move, and
each pair of vertices of $H''$ is connected in $H'$ in each of these time steps.
Therefore, the agents
starting in the middle location can explore $H''$ in $T(n'/2)$ of those time steps. Consequently, after the
first $2n'$ time steps to place the agents, the next
$T(n'/2)+n'/2$ time steps are enough to explore $H'$.

We subsequently explore the right
half in the same way. 
The total time to explore $G''$ is 
$T(n')\le 2( 2n'+T(n'/2)+n'/2 ) = O(n' \log n')$.
\qed
\end{proof}

Using Lemma~\ref{lem:multi}, we can reduce the number of agents to one.

\begin{corollary}
A temporal $2\times n$ grid can be explored in $O(n\log^3 n)$ time steps by one agent.
\end{corollary}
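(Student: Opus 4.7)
The plan is to apply Lemma~\ref{lem:multi} directly to the multi-agent exploration schedule guaranteed by Theorem~\ref{th:fast}. Theorem~\ref{th:fast} shows that any temporal $2\times n$ grid can be explored by $k=4\log n$ agents in $t=O(n\log n)$ steps, and Lemma~\ref{lem:multi} converts any $k$-agent exploration schedule of $t$ steps into a single-agent schedule of $O((t+n)k\log n)$ steps. Substituting our values of $t$ and $k$ gives
\[
O\bigl((t+n)\,k\,\log n\bigr) \;=\; O\bigl((n\log n + n)\cdot (4\log n) \cdot \log n\bigr) \;=\; O(n\log^3 n),
\]
which is the claimed bound.

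To write this out formally, I would first note that Theorem~\ref{th:fast} applies to any temporal realization of the $2\times n$ grid, which is exactly the hypothesis that Lemma~\ref{lem:multi} requires (the lemma is stated for a fixed underlying graph and applies to every temporal realization). Then I would invoke Lemma~\ref{lem:multi} with $G$ being the $2\times n$ grid, $k=4\log n$, and $t=O(n\log n)$, and simplify the expression $(t+n)k\log n$ to $O(n\log^3 n)$.

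There is no real obstacle here beyond the arithmetic: the result is essentially a one-line corollary of Theorem~\ref{th:fast} and Lemma~\ref{lem:multi}. The only thing to double-check is that the log-factor bookkeeping in Lemma~\ref{lem:multi} is applied correctly: the lemma contributes one factor of $k=\log n$ (for the number of copied agents per phase) and one factor of $\log n$ (for the number of phases needed to drive the unexplored set below one), while the per-phase cost of $t+n=O(n\log n)$ contributes the third logarithm, giving $n\log^3 n$ in total.
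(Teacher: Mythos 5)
Your proposal is correct and is exactly the paper's argument: the paper derives this corollary by applying Lemma~\ref{lem:multi} to the $4\log n$-agent, $O(n\log n)$-step schedule of Theorem~\ref{th:fast}, yielding $O((n\log n + n)\cdot\log n\cdot\log n)=O(n\log^3 n)$. Your bookkeeping of the three logarithmic factors matches the intended computation.
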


Let $H_n$ be the graph that consists of a path with $n-1$ vertices
and one additional vertex that is adjacent to all vertices on the path.
Note that $H_n$ can be obtained from the $2\times (n-1)$ grid using
$n-2$ edge contractions. Therefore, Corollary~\ref{cor:contraction}
implies that every temporal $H_n$ can be explored in $O(n\log^3 n)$ time steps
by one agent.

\section{Temporal Graphs with Regularly Present Edges}
\label{sec:regular}%
We say that a temporal graph has {\emph{regularly present edges}}
if for every edge $e$ there is a constant integer $I_e$
with the property that,
whenever $e$ is absent from the temporal graph,
the number of consecutive time steps during which $e$ is absent
is strictly less than $I_e$.
In other words, if edge $e$ is not present in some time step~$t$,
then the first time step after $t$ when $e$ is present again is no
later than time step $t+I_e-1$.
Moreover, in contrast to the rest of the paper, in this and the
following section, we drop the assumption that 
we know the schedule
of the existing edges in advance. In other words, for the rest of the paper,
we consider an online problem 
where the algorithm only knows $I_e$
for each edge $e$, but
has no advance information in which time step an edge is present.
In each time step $i$, the algorithm has to decide whether to stay at
its current vertex $v$ or move to a neighbor of $v$ in the current
graph knowing only the graphs of all time steps up to~$i$.
We also drop the connectedness assumption in this and the following section.
In this section, we only require that there is a constant $c\ge 1$ such that,
over every cut $S$, it is guaranteed that there is an edge
on average at least once every $c$ time steps, i.e.,
$\sum_{e: |e \cap S| = 1 }  1/I_e \ge 1/c$ or, equivalently,
$\sum_{e: |e \cap S| = 1 }  c/I_e \ge 1$.

\begin{theorem}
A temporal graph $\calG$ 
with regularly present edges whose underlying graph
has $n$ vertices and $m$ %$O(n)$ 
edges can be explored online in $O(m)$ time steps.
\label{thm:regular}
\end{theorem}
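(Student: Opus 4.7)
The plan is to have the single agent perform a depth-first traversal of a fixed spanning tree of the underlying graph, exploiting the fact that every edge reappears within a bounded number of steps. First I would fix a spanning tree $T$ of the underlying graph $G=(V,E)$; since $G$ is connected and has $n$ vertices, $T$ has exactly $n-1$ edges. A standard DFS starting at $s$ visits all vertices of $V$ and traverses each edge of $T$ exactly twice, giving $2(n-1)$ edge crossings in total. Crucially, the $O(n)$-edge sparsity hypothesis enters here only through the size of this spanning tree; the number of non-tree edges is irrelevant.

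Next I would show that each individual crossing takes only $O(1)$ steps. Suppose the agent sits at an endpoint $u$ of a tree edge $e=\{u,v\}$ at some time $t$ and wants to reach $v$. By the regularity assumption, $e$ is absent for at most $I_e$ consecutive steps, so within the window $t,t+1,\ldots,t+I_e$ the edge $e$ must be present in at least one step; the agent simply waits at $u$ until the first such step and then crosses to $v$, spending at most $I_e+1$ steps before arriving at $v$. Summing over the $2(n-1)$ traversals bounds the total arrival time by $\sum_{e\in T}2(I_e+1)$.

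The only thing that needs care---and really the only potential obstacle---is justifying that the per-edge constants $I_e$ can be absorbed into a single uniform bound. This is exactly the content of the hypothesis that each $I_e$ is a constant independent of $n$: setting $I_{\max}=\max_{e\in E}I_e=O(1)$, the sum above is at most $2(n-1)(I_{\max}+1)=O(n)$, as required. Note that the argument uses only the upper bound on absence gaps; the lower bound $I_e/c$ in the definition plays no role here. The resulting schedule is a legitimate single-agent temporal walk because the waiting-then-crossing step produces a strictly increasing sequence of time indices, and the DFS order guarantees that every vertex of $V$ is visited.
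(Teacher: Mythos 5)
Your argument is internally consistent, but it rests on a reading of the hypothesis that trivializes the theorem and is not the intended one. You take ``constant integer $I_e$'' to mean $I_e=O(1)$ uniformly over all edges and independently of $n$, and you explicitly observe that under this reading the lower bound $I_e/c$ on the absence gaps and the $O(n)$ bound on the number of edges play no role. That should have been a warning sign: if both hypotheses were superfluous, they would not be in the statement. The intended meaning is that $I_e$ is a fixed per-edge regularity parameter (every maximal absence interval of $e$ has length between $I_e/c$ and $I_e$), but $I_e$ may differ from edge to edge and may grow with $n$; an edge that vanishes for $\Theta(n)$ consecutive steps is perfectly admissible. Under that reading your bound is only $O\bigl(\sum_{e\in T}(I_e+1)\bigr)$, and for an arbitrary spanning tree this can be superlinear. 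For instance, take an always-present path $v_1,\ldots,v_n$ together with edges $\{v_1,v_i\}$ for all $i$, each present only once every $\Theta(n)$ steps; the graph is connected at every step and has $O(n)$ edges, but a DFS tree that happens to be the star at $v_1$ costs $\Theta(n^2)$ with your waiting strategy.

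The missing content is precisely the substance of the paper's proof: the spanning tree must be chosen as a \emph{minimum} spanning tree $T$ with respect to the weights $J_e$ obtained by rounding each $I_e$ down to a power of two, and one must then prove $\sum_{e\in T}J_e=O(n)$. This is done by a charging argument that uses all three ingredients you discarded. For each scale $2^k$, deleting the weight-$2^k$ edges of $T$ splits it into components; by minimality of $T$, every underlying-graph edge leaving such a component has weight at least $2^k$; because the graph is connected at every step, some leaving edge is present at each step, and the \emph{lower} bound $I_e/c$ on absence gaps caps how often any single edge can be present, forcing $\sum_{e}c/J_e\ge 1$ over the leaving edges. Each component can therefore charge $2^k$ to its leaving edges, each edge of $G$ absorbs only $O(c)$ charge summed over all scales, and the $O(n)$ bound on the number of edges of $G$ then bounds the total tree weight by $O(n)$. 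Your traversal of the tree (DFS with waiting, equivalently the paper's Euler tour) is fine once this is in place, but without the MST choice and the charging argument the proof does not establish the stated theorem.
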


\begin{proof}%[sketch]
For each $I_e$, let $J_e$ be the largest power of $2$ with  $J_e\leq I_e$.
Calculate a minimum spanning tree $T$ of the underlying graph with respect
to 
edge weights $J_e$. Explore 
the graph 
by following
an Euler tour of $T$ (if the next edge of the tour is
not present in the current time step, simply wait at the
current vertex until the edge becomes available). Moving over an edge $e$ takes
at most $I_e\le 2J_e$ time steps, so the total exploration
takes at most $4\sum_{e\in T} J_e$ time steps.

We next show %claim 
that $\sum_{e\in T} J_e = O(m)$.
The idea is, for each edge $e$ of $T$, 
to split $J_e$ into several charges and
to distribute these charges %its weight $J_e$ 
 to several edges  such that,
afterwards, one can show that %each edge has only constant costs. % 
the sum of the charges distributed to each edge is $O(1)$.
Consider any $k\ge 0$ such
that $T$ contains at least one edge $e$ with
$J_e=2^k$. Consider the connected components
$C_1,\ldots,C_{r_k}$ of $T\setminus\{e\in T\mid J_e=2^k\}$.
Observe that every edge of $\calG$ {\em leaving} a component $C_i$
 (i.e., with one endpoint in $C_i$)
must have weight at least $2^k$ since, otherwise, 
the tree
would not contain $e$, but an edge with smaller weight than $e$. 
Let $E_i$ be the set of edges 
of the underlying graph of $\calG$ that leave
$C_i$. 
Since
$\sum_{e\in E_i}
c/I_e \ge 1$,
$\sum_{e\in E_i} c/J_e \ge 1$.
Assign a charge of $c2^k/J_e$ to each $e\in E_i$.
The total charge that $C_i$ assigns to $E_i$ is
$\sum_{e\in E_i} c2^k/J_e =
2^k
\sum_{e\in E_i} c/J_e \ge 2^k$.
Since an edge receives the charge $c2^k/J_e$ from at
most two components $C_i$, no edge receives
more than $2c2^k/J_e$ of charge for every fixed $k$.

The total %cost 
weight
of edges of weight $2^k$ in $T$
is $2^k(r_k-1)$. Each of the $r_k$ components
assigns a charge of at least $2^k$ to edges, so the
total charge of the $r_k$ components is greater
than the total cost of edges of weight $2^k$ in $T$. 

To bound the total charge that an edge $e$ of $G$
can receive, let the weight of $e$ be $J_e=2^j$.
For $k>j$, $e$ does not receive any charge. For
each $0\le k\le j$, $e$ receives charge at most
$2c 2^k/2^j$. The total charge received by $e$
is then at most $\sum_{k=0}^j \frac{2c2^k}{2^j}
\le \frac{2c2^{j+1}}{2^j}=4c$.

So we have that all the weight of $T$ is charged
to edges of $G$, and no edge of $G$ receives more
than $4c$ of charge. As $G$ has $m$ %$O(n)$ 
edges,
the total charge is at most $4cm=O(m)$, and
hence the weight of $T$ is $O(m)$.
\qed
\end{proof}

\section{Random Temporal Graphs}
\label{sec:random}%

Let $G=(V,E)$ be a given graph with $n$ vertices
and $m$ edges.
As in the previous section, we do not assume that
we know the schedule of the edges. Instead, we now know 
the probabilities $p_e$ for all
$e\in E$ such that each edge $e$ exists in every
time step with probability $p_e$. We assume that the probabilities of two edges
are independent. 
$G$ is not necessarily connected in each
time step.
In order to guarantee that exploration is always possible,
we make the following assumption that replaces the connectedness
condition in every time step by a probabilistic analogue: We require that,
for every cut, the number of edges crossing the cut is at
least some constant in expectation in every time step. More precisely,
we now assume that the total sum of the probabilities of the edges
over each cut of $G$ is greater
or equal than $1/c$ for some arbitrary constant $c\ge 1$.
In particular, this implies that $m\ge n-1$.

\begin{theorem}
\label{th:texprandom}%
Let $G$ be a random graph with $n$ vertices and %$O(n)$
$m$ edges where each edge $e$ exists with probability $p_e$
and
where the total sum of the probabilities of the edges
over each cut of $G$ is greater
or equal than some constant. Then, for every constant $d\ge 1$ we can
find an online exploration schedule of an agent that uses only
$O(m \log n)$ time
steps with probability $1-1/n^d$.
The exploration schedule traverses an Euler
tour of a minimum spanning tree of $G$ with respect
to edge weights $I_e=1/p_e$.
\end{theorem}

\begin{proof}
To explore $G$, we first determine a spanning tree $T$
as in the previous section after setting the weight of $e$ to
$I_e=1/p_e$. Then we explore $G$ by an Euler Tour of $T$.
Let $\ell \ge 2$ be some positive number that we fix later.
The number of time steps an edge $e$ is present in an interval of $t=\lceil
 \ell \cdot I_e  \rceil$
time steps is a
random variable $X_e$ with $Pr[ X_e < 1 ] = (1-p_e)^t \le
(1-p_e)^{\ell/p_e}\le \exp(- \ell)$.
Intuitively speaking, with increasing $\ell$, the probability
that $e$ is not present in any of $t$ consecutive time steps drops
exponentially.
Since the Euler tour visits each edge of $T$ twice,
the probability that the total exploration takes more than
$2 \sum_{e\in T} \lceil \ell \cdot I_e \rceil \le 2n+ 2 \sum_{e\in T} \ell
\cdot I_e   $ time steps can be upper bounded by
$2 \sum_{e\in T} \exp(- \ell )\le  2n \exp(- \ell )$. By choosing
$\ell=3 d \cdot \ln n$ for some constant $d\ge 1$, this bound %our bound from above
is $\le 1/n^d$.
Thus, with high probability (with probability $1-1/n^d$), 
we can explore $G$ in $\tau=\lceil 2n+2 \sum_{e\in T}
\ell \cdot I_e \rceil \le \lceil 2n + 4 \ell \sum_{e\in T} J_e \rceil = O(m \log n)$ time steps as shown
in the proof of Theorem~\ref{thm:regular},
where $\sum_{e: |e \cap S| = 1 } c/I_e=\sum_{e: |e \cap S| = 1 } c
\cdot p_e \ge 1$ for each cut $S\subseteq V$ of $G$
follows from the fact that the total sum of the probabilities of the edges
over the cut $S$ is greater
or equal than $1/c$.
\qed
\end{proof}

\section{Application: Gossiping Problem}
\label{sec:gossiping}%
In this section we consider a distributed computing
problem in a network of processors where the presence
of links in each time step is determined in the same way
as in the random temporal graphs considered in Section~\ref{sec:random}.
Formally, the problem and the model of computation are defined
as follows. Throughout this section, we refer to time steps
as \emph{rounds}, as is common in distributed computing.

First, we define the model of distributed computing that
we consider.

\begin{definition}[Model of Distributed Computing]
\label{def:modelDCRTG}%
Consider the following model of distributed computing
in random temporal graphs:
Let $G$ be a connected graph with $n$ vertices and $m$ edges,
representing a communication network where each vertex
is a processor.
In each round, each edge $e$ of $G$ is present with
an %unconditional 
independent probability~$p_e$.
The graph $G$ represents the underlying graph of a random temporal graph,
and the edge probabilities $p_e$ describe the temporal realization
of $G$.
For some arbitrary constant $c \ge  1$, the sum of the
probabilities $p_e$ of the edges $e$ over each cut
of $G$ must be at least $1/c$.
Each vertex has a unique 
{\textsc{id}} and knows the following at the start of the
computation:
\begin{itemize}
\item its own {\textsc{id}}
\item $n$ and $m$,
\item the moment in time when the distributed computation starts,
\item for each edge $e$ that is incident with the vertex,
the {\textsc{id}} of the opposite endpoint and the
probability $p_e$ of $e$ to exist in a time
step.
\end{itemize}
The computation proceeds in synchronous rounds. In every round,
each vertex can do an arbitrary amount of local computation and
send one message of arbitrary size (consisting possibly of all information 
known to the vertex) to one of its neighbors.
A message from a vertex to a neighbor can only be
received if the edge between the two vertices exists in that round,
and the sender can only detect the presence of the edge by a successful
message delivery.
Both successful and unsuccessful message transmissions are counted in
the total number of messages.
\end{definition}

Now we define the gossiping problem that we want
to solve.

\begin{definition}[Gossiping\hspace{4pt} Problem\hspace{4pt} in\hspace{4pt} Random\hspace{4pt} Temporal\hspace{4pt} Graphs]
\label{def:GPRTG}%
Consider the model of distributed computing in random temporal
graphs of Definition~\ref{def:modelDCRTG}. At the start of the
computation, each vertex (processor) additionally has an
\emph{initial value}.
The goal of the gossiping problem in random temporal graphs
is to distribute the initial value of each vertex to all other vertices.
\end{definition}

Our aim is to present a distributed algorithm to solve the
gossiping problem in random temporal graphs
while sending only
a small total number of messages over the edges of~$G$. 
Since we later want to use 
$4\lceil \log n \rceil \le n$, we assume that $n\ge 20$ in the remainder
of this section.

The basic idea of our algorithm is to first determine a minimum
spanning tree $T$ of $G$ with respect to edge weights defined
by setting the weight of each edge $e$ to $I_e=1/p_e$
(as in the proof of Theorem~\ref{th:texprandom}),
and then to use two traversals of an Euler tour of $T$
to distribute the value of each processor to all other
processors. Once the minimum spanning tree $T$ has been constructed,
the same tree can be re-used to solve further gossiping problems
without recomputing the tree.

First, we adapt the minimum-spanning-tree
algorithm of Gallager, Humblet, and Spira~\cite{GalHS83}
to our model of distributed computing in random temporal graphs.

\begin{lemma}
\label{lem:distmst}%
In the model of distributed computing of Definition~\ref{def:modelDCRTG},
a minimum spanning tree $T$ of $G$ with respect to edge weights set
to $I_e=1/p_e$ for all edges $e$ of $G$ can be built using
$O(m \log^2 n)$ messages %and time 
with probability at least $1-1/n^{d-1}$.
\end{lemma}

\begin{proof}
We compute $T$ in phases similar to
Kruskal's algorithm, with growing connected components distributed over the
whole graph. Moreover, each phase is divided into four subphases; each
runs for exactly $\tau$ rounds
where $\tau=O(m \log n)$ is the integer defined in the proof of Theorem~\ref{th:texprandom}.
Let us assume that at the beginning of each phase, all vertices
of each
component know
the set of vertex {\textsc{id}}s of all vertices of the component, and hence also
the minimal {\textsc{id}} of a vertex belonging to the
component. 
In the following, we describe our algorithm by {\em tokens walking around} in the
graph. Whenever a {\em token moves} from one vertex $u$ to another vertex $v$, a
message is sent from $u$ to $v$ until the edge $\{u,v\}$ is present in $G$.

Each phase consists of several subphases, which itself consist of several
rounds. In the first subphase,
each vertex first identifies its incident edge of minimal
weight leaving the component     (a vertex can determine whether an incident edge leaves its component
because it knows the set of vertex {\textsc{id}}s of its component as well as the vertex {\textsc{id}}
of the other endpoint of the edge), and the vertex of minimal {\textsc{id}} in each
component starts
a token that walks around the so far constructed minimum spanning tree
twice; first 
to collect and then to distribute
the information on these edges. Afterwards, each vertex in every component knows the edge
of
minimal weight leaving its component---to make the weight unique, incorporate the vertex
{\textsc{id}}s in the weights. Let us call these edges the {\em new
edges} (of the final minimum spanning tree). Moreover, 
for each component $C$ and the new edge $e_C$ chosen by the component, define the vertex in the component  
incident to the new edge as the {\em start vertex} $s_C$ of the component.

In a second subphase, the start vertex $s_C$ of each component $C$
informs the opposite vertex of the new edge $e_C$ % of the opposite component 
that it is incident to a new edge, which will connect both components
soon. This can be done 
by sending a message from each start vertex over its incident new edge.

In a third %fourth 
subphase, each component $C$ starts 
a new token from $s_C$
walking around the minimum spanning tree of the component. 
Whenever the token visits a vertex $v$ that is incident to a new
edge $e_{C'}\neq e_C$ %of the final minimum spanning tree
of another component $C'$, the token of $C$ waits for a message over $e_{C'}$
from the token walking around in $C'$. 
To be more precise, the token for $C$ waits for a message for 
each new edge $e\neq e_C$ incident to $v$. %from the opposite vertex. 
(Possibly, this already happened
before the token reaches $v$. Then the token can continue
 immediately.) %The only exception is 
Finally, after the
token 
returned to its start vertex $v$, i.e, after visiting all vertices of the old
component $C$ and after receiving a message over all incident new edges
$e\neq e_C$, 
a message is sent over $e_C$
with all {\textsc{id}}s known by $C$'s token. %the new edge. 

Assume that the tokens above collect all 
{\textsc{id}}s of both, the visited vertices and the {\textsc{id}}s received
from other tokens sent by their messages over the new edges. % transmit this information. 
We next want to show that there is then a token at the end of the
subphase that knows all vertex {\textsc{id}}s of the new component. 
For the analysis, let us
merge each component to one vertex and direct
the new edges in the direction in which the message is sent.
(%It might be that 
Since
over the final edge messages are sent simultaneously in 
both directions, %. In this case, 
the endpoints of the edge must agree to
ignore one of the two messages.)
In this way we obtain a rooted intree. It is not hard to see that 
the root of the tree is exactly the component whose token 
finishes its travel last 
and knows all 
vertex {\textsc{id}}s. 

In a fourth subphase, these {\em last finished tokens} (one for each new component)
 can travel the spanning tree of the new
component to inform all vertices about the set of vertex {\textsc{id}}s of the new component, and
hence also about the new vertex with minimal {\textsc{id}}.
This finishes the current phase and the next phase can start.

To bound the number of messages sent in one phase
 observe that the total number of messages in each subphase is bounded by 
a constant factor times
the number of messages of an Euler tour of the final
 minimum spanning tree. This is because we only send messages over edges 
that are also used by the Euler tour and because,
since the probabilities of two edges are independent, it makes no
difference in which order the messages are sent (even parallel sending is
possible). 
Moreover, since the number of components halves in each phase,
there are $\lceil \log n \rceil$ phases. Thus by Theorem~\ref{th:texprandom}, we can build
$T$ with $O(m \log^2 n)$ messages %and time 
with 
probability $\ge 1-4 \lceil \log n\rceil /n^d\ge 1-1/n^{d-1}$.
\qed
\end{proof}

We are now ready to prove the main theorem of this section.

\begin{theorem}
Let $d\ge 1$ be any constant.
Consider the model of distributed computing of Definition~\ref{def:modelDCRTG}.
After building a minimum spanning tree
using $O(m \log^2 n)$ messages with 
probability at least $1-1/n^{d-1}$, we can solve 
instances of the gossiping problem on $G$ 
with $O(m \log n)$ messages per instance
with probability at least $1-2/n^d$. 
\end{theorem}

\begin{proof}
The message bound for constructing the minimum spanning tree
is given by Lemma~\ref{lem:distmst}.
Once the minimum spanning tree has been constructed, we can solve
the gossiping problem as follows.
Starting from the vertex of minimal {\textsc{id}} we can 
send a message collecting all initial values along 
the Euler tour twice, in a first traversal to collect all initial values and 
in a second traversal to
distribute them.
By Theorem~\ref{th:texprandom}, we can run each traversal with $O(m \log n)$
messages with probability \mbox{$\ge 1-1/n^d$}.
\qed
\end{proof}

We finally want to remark that
the
number of \emph{successfully} transmitted messages for the initialization
(minimum spanning tree computation) and
for solving the gossiping problem is $O(n\log n)$ and $O(n)$,
respectively.

\section{Conclusion}
\label{sec:conc}%
Even though the literature on temporal graphs has grown
substantially in recent years,
the study of temporal graphs is still in its infancy,
and we do not yet have intuition and a range of techniques
comparable to what has been developed over many years
for static graphs. Even seemingly simple tasks such
as constructing temporal graphs (possibly with an underlying
graph from a given family) that cannot be explored
quickly is surprisingly difficult. We hope
that the methods used in this paper to
prove results for temporal graphs, e.g., the
general conversion of multi-agent solutions
to single-agent solutions, contribute to the
formation of a growing toolbox for dealing
with temporal graphs.

Our results directly suggest a number of
questions for future work. In particular, deriving
tight bounds on the largest number of time steps
required to explore a temporal graph whose underlying
graph is an $m\times n$ grid, a bounded degree graph,
or a planar graph would be interesting.
We have given a lower bound of $\Omega(n\log n)$ time steps for a specific
family of temporal graphs whose underlying graph is
planar and has bounded degree, but the upper bounds
we have are only $O(n^{1.8}\log n)$ time steps for underlying planar
graphs and $O(n^{1.75})$ time steps for the case where the
graph in each time step has bounded degree~\cite{EKLSS19}.
Closing this gap would be a worthwhile research direction.
It would also
be interesting to study the approximability of \tempex
for restricted underlying graphs, and to identify further
cases of underlying graphs where the temporal exploration
problem can be solved optimally in polynomial time.

\bibliographystyle{plainnat}

\bibliography{tempgraph}

\begin{thebibliography}{36}
\providecommand{\natexlab}[1]{#1}
\providecommand{\url}[1]{\texttt{#1}}
\expandafter\ifx\csname urlstyle\endcsname\relax
  \providecommand{\doi}[1]{doi: #1}\else
  \providecommand{\doi}{doi: \begingroup \urlstyle{rm}\Url}\fi

\bibitem[Ajtai et~al.(1982)Ajtai, Koml{\'{o}}s, and Szemer{\'{e}}di]{AjtKS82}
Mikl{\'{o}}s Ajtai, J{\'{a}}nos Koml{\'{o}}s, and Endre Szemer{\'{e}}di.
\newblock Largest random component of a k-cube.
\newblock \emph{Combinatorica}, 2\penalty0 (1):\penalty0 1--7, 1982.
\newblock \doi{10.1007/BF02579276}.

\bibitem[Avin et~al.(2018)Avin, Kouck{\'{y}}, and Lotker]{AvinKL18}
Chen Avin, Michal Kouck{\'{y}}, and Zvi Lotker.
\newblock Cover time and mixing time of random walks on dynamic graphs.
\newblock \emph{Random Struct. Algorithms}, 52\penalty0 (4):\penalty0 576--596,
  2018.
\newblock \doi{10.1002/rsa.20752}.

\bibitem[Bender et~al.(1998)Bender, Fern\'{a}ndez, Ron, Sahai, and
  Vadhan]{BenFRSV98}
Michael~A. Bender, Antonio Fern\'{a}ndez, Dana Ron, Amit Sahai, and Salil
  Vadhan.
\newblock The power of a pebble: Exploring and mapping directed graphs.
\newblock In \emph{Proc. 30th Annual ACM Symposium on Theory of Computing
  ({STOC} 1998)}, pages 269--278, New York, NY, USA, 1998. ACM.
\newblock \doi{10.1145/276698.276759}.

\bibitem[Berman(1996)]{Ber96}
Kenneth~A. Berman.
\newblock Vulnerability of scheduled networks and a generalization of
  {Menger's} theorem.
\newblock \emph{Networks}, 28\penalty0 (3):\penalty0 125--134, 1996.
\newblock \doi{10.1002/(SICI)1097-0037(199610)28:3<125::AID-NET1>3.0.CO;2-P}.

\bibitem[Bermond et~al.(1995)Bermond, Gargano, Rescigno, and Vaccaro]{BerGRV95}
Jean{-}Claude Bermond, Luisa Gargano, Adele~A. Rescigno, and Ugo Vaccaro.
\newblock Fast gossiping by short messages.
\newblock In \emph{Proc. 22nd International Colloquium on Automata, Languages
  and Programming ({ICALP95})}, volume 944 of \emph{LNCS}, pages 135--146.
  Springer, 1995.
\newblock \doi{10.1007/3-540-60084-1\_69}.

\bibitem[Biswas et~al.(2015)Biswas, Ganguly, and Shah]{BisGS15}
Sudip Biswas, Arnab Ganguly, and Rahul Shah.
\newblock Restricted shortest path in temporal graphs.
\newblock In \emph{Proc. 26th International Conference on Database and Expert
  Systems Applications (DEXA 2015), Part I}, volume 9261 of \emph{LNCS}, pages
  13--27. Springer, 2015.
\newblock \doi{10.1007/978-3-319-22849-5\_2}.

\bibitem[Brod{\'{e}}n et~al.(2004)Brod{\'{e}}n, Hammar, and Nilsson]{BroHN04}
Bj{\"{o}}rn Brod{\'{e}}n, Mikael Hammar, and Bengt~J. Nilsson.
\newblock Online and offline algorithms for the time-dependent {TSP} with time
  zones.
\newblock \emph{Algorithmica}, 39\penalty0 (4):\penalty0 299--319, 2004.
\newblock \doi{10.1007/s00453-004-1088-z}.

\bibitem[Bui{-}Xuan et~al.(2003)Bui{-}Xuan, Ferreira, and Jarry]{XuaFJ03}
Binh{-}Minh Bui{-}Xuan, Afonso Ferreira, and Aubin Jarry.
\newblock Computing shortest, fastest, and foremost journeys in dynamic
  networks.
\newblock \emph{Int. J. Found. Comput. Sci.}, 14\penalty0 (2):\penalty0
  267--285, 2003.
\newblock \doi{10.1142/S0129054103001728}.

\bibitem[Casteigts et~al.(2012)Casteigts, Flocchini, Quattrociocchi, and
  Santoro]{CasFQS12}
Arnaud Casteigts, Paola Flocchini, Walter Quattrociocchi, and Nicola Santoro.
\newblock Time-varying graphs and dynamic networks.
\newblock \emph{{IJPEDS}}, 27\penalty0 (5):\penalty0 387--408, 2012.
\newblock \doi{10.1080/17445760.2012.668546}.

\bibitem[Casteigts et~al.(2014)Casteigts, Flocchini, Mans, and
  Santoro]{CasFMS14}
Arnaud Casteigts, Paola Flocchini, Bernard Mans, and Nicola Santoro.
\newblock Measuring temporal lags in delay-tolerant networks.
\newblock \emph{{IEEE} Trans. Computers}, 63\penalty0 (2):\penalty0 397--410,
  2014.
\newblock \doi{10.1109/TC.2012.208}.

\bibitem[Chlebus and Kowalski(2002)]{ChlK02}
Bogdan~S. Chlebus and Dariusz~R. Kowalski.
\newblock Gossiping to reach consensus.
\newblock In \emph{Proc. 14th Annual {ACM} Symposium on Parallelism in
  Algorithms and Architectures ({SPAA} 2002)}, pages 220--229. {ACM}, 2002.
\newblock \doi{10.1145/564870.564908}.

\bibitem[Di~Luna et~al.(2016)Di~Luna, Dobrev, Flocchini, and Santoro]{LDFS16}
Giuseppe~Antonio Di~Luna, Stefan Dobrev, Paola Flocchini, and Nicola Santoro.
\newblock Live exploration of dynamic rings.
\newblock In \emph{Proceedings of the 36th {IEEE} International Conference on
  Distributed Computing Systems ({ICDCS} 2016)}, pages 570--579. {IEEE}
  Computer Society, 2016.
\newblock \doi{10.1109/ICDCS.2016.59}.

\bibitem[Diestel(2010)]{Diestel2010}
Reinhard Diestel.
\newblock \emph{Graph Theory}.
\newblock Number 173 in Graduate Texts in Mathematics. Springer, 4th edition,
  2010.

\bibitem[Erlebach et~al.(2015)Erlebach, Hoffmann, and Kammer]{EHK15}
Thomas Erlebach, Michael Hoffmann, and Frank Kammer.
\newblock On temporal graph exploration.
\newblock In Magn{\'{u}}s~M. Halld{\'{o}}rsson, Kazuo Iwama, Naoki Kobayashi,
  and Bettina Speckmann, editors, \emph{42nd International Colloquium on
  Automata, Languages, and Programming ({ICALP} 2015)}, volume 9134 of
  \emph{Lecture Notes in Computer Science}, pages 444--455. Springer, 2015.
\newblock \doi{10.1007/978-3-662-47672-7\_36}.

\bibitem[Erlebach et~al.(2019)Erlebach, Kammer, Luo, Sajenko, and
  Spooner]{EKLSS19}
Thomas Erlebach, Frank Kammer, Kelin Luo, Andrej Sajenko, and Jakob~T. Spooner.
\newblock Two moves per time step make a difference.
\newblock In Christel Baier, Ioannis Chatzigiannakis, Paola Flocchini, and
  Stefano Leonardi, editors, \emph{46th International Colloquium on Automata,
  Languages, and Programming ({ICALP} 2019)}, volume 132 of \emph{LIPIcs},
  pages 141:1--141:14. Schloss Dagstuhl - Leibniz-Zentrum f{\"{u}}r Informatik,
  2019.
\newblock \doi{10.4230/LIPIcs.ICALP.2019.141}.

\bibitem[Erlebach et~al.(2021)Erlebach, Hoffmann, and Kammer]{EHK2021}
Thomas Erlebach, Michael Hoffmann, and Frank Kammer.
\newblock On temporal graph exploration.
\newblock \emph{Journal of Computer and System Sciences}, 119:\penalty0 1--18,
  2021.
\newblock \doi{https://doi.org/10.1016/j.jcss.2021.01.005}.

\bibitem[Flocchini et~al.(2009)Flocchini, Mans, and Santoro]{FloMS09}
Paola Flocchini, Bernard Mans, and Nicola Santoro.
\newblock Exploration of periodically varying graphs.
\newblock In \emph{Proc. 20th International Symposium on Algorithms and
  Computation ({ISAAC} 2009)}, volume 5878 of \emph{LNCS}, pages 534--543.
  Springer, 2009.
\newblock \doi{10.1007/978-3-642-10631-6\_55}.

\bibitem[Frederickson(1987)]{Fre87}
Greg~N. Frederickson.
\newblock Fast algorithms for shortest paths in planar graphs, with
  applications.
\newblock \emph{{SIAM} J. Comput.}, 16\penalty0 (6):\penalty0 1004--1022, 1987.
\newblock \doi{10.1137/0216064}.

\bibitem[Gallager et~al.(1983)Gallager, Humblet, and Spira]{GalHS83}
Robert~G. Gallager, Pierre~A. Humblet, and Philip~M. Spira.
\newblock A distributed algorithm for minimum-weight spanning trees.
\newblock \emph{{ACM} Trans. Program. Lang. Syst.}, 5\penalty0 (1):\penalty0
  66--77, 1983.
\newblock \doi{10.1145/357195.357200}.

\bibitem[Garey et~al.(1976)Garey, Johnson, and Tarjan]{GarJT76}
M.~R. Garey, David~S. Johnson, and Robert~Endre Tarjan.
\newblock The planar hamiltonian circuit problem is {NP}-complete.
\newblock \emph{{SIAM} J. Comput.}, 5\penalty0 (4):\penalty0 704--714, 1976.
\newblock \doi{10.1137/0205049}.

\bibitem[Hromkovic et~al.(2005)Hromkovic, Klasing, Pelc, Ruzicka, and
  Unger]{HroKPRU05}
Juraj Hromkovic, Ralf Klasing, Andrzej Pelc, Peter Ruzicka, and Walter Unger.
\newblock \emph{Dissemination of Information in Communication Networks -
  Broadcasting, Gossiping, Leader Election, and Fault-Tolerance}.
\newblock Texts in Theoretical Computer Science. An {EATCS} Series. Springer,
  2005.
\newblock ISBN 978-3-540-00846-0.
\newblock \doi{10.1007/b137871}.

\bibitem[Ilcinkas and Wade(2018)]{IW18}
David Ilcinkas and Ahmed~Mouhamadou Wade.
\newblock Exploration of the t-interval-connected dynamic graphs: the case of
  the ring.
\newblock \emph{Theory of Computing Systems}, 62\penalty0 (5):\penalty0
  1144--1160, 2018.
\newblock \doi{10.1007/s00224-017-9796-3}.

\bibitem[Ilcinkas et~al.(2014)Ilcinkas, Klasing, and Wade]{IKW14}
David Ilcinkas, Ralf Klasing, and Ahmed~Mouhamadou Wade.
\newblock Exploration of constantly connected dynamic graphs based on cactuses.
\newblock In \emph{Proceedings of the 21st International Colloquium on
  Structural Information and Communication Complexity ({SIROCCO} 2014)}, volume
  8576 of \emph{LNCS}, pages 250--262. Springer, 2014.
\newblock \doi{10.1007/978-3-319-09620-9\_20}.

\bibitem[Karlin et~al.(1994)Karlin, Nelson, and Tamaki]{KarNT94}
Anna~R. Karlin, Greg Nelson, and Hisao Tamaki.
\newblock On the fault tolerance of the butterfly.
\newblock In \emph{Proc. 26th Annual {ACM} Symposium on Theory of Computing
  ({STOC} 1994)}, pages 125--133. {ACM}, 1994.
\newblock \doi{10.1145/195058.195117}.

\bibitem[Kempe et~al.(2002)Kempe, Kleinberg, and Kumar]{KemKK02}
David Kempe, Jon~M. Kleinberg, and Amit Kumar.
\newblock Connectivity and inference problems for temporal networks.
\newblock \emph{J. Comput. Syst. Sci.}, 64\penalty0 (4):\penalty0 820--842,
  2002.
\newblock \doi{10.1006/jcss.2002.1829}.

\bibitem[Kesten(1980)]{Kes80}
Harry Kesten.
\newblock The critical probability of bond percolation on the square lattice
  equals ${1\over 2}$.
\newblock \emph{Comm. Math. Phys.}, 74\penalty0 (1):\penalty0 41--59, 1980.

\bibitem[Kloks(1994)]{Klo94}
Ton Kloks.
\newblock \emph{Treewidth, Computations and Approximations}, volume 842 of
  \emph{LNCS}.
\newblock Springer, 1994.
\newblock ISBN 3-540-58356-4.
\newblock \doi{10.1007/BFb0045375}.

\bibitem[Lipton and Tarjan(1979)]{LT79}
Richard~J. Lipton and Robert~Endre Tarjan.
\newblock A separator theorem for planar graphs.
\newblock \emph{{SIAM} J. Appl. Math.}, 36\penalty0 (2):\penalty0 177--189,
  1979.

\bibitem[Liu and Wu(2009)]{LiuW09}
Cong Liu and Jie Wu.
\newblock Scalable routing in cyclic mobile networks.
\newblock \emph{{IEEE} Trans. Parallel Distrib. Syst.}, 20\penalty0
  (9):\penalty0 1325--1338, 2009.
\newblock \doi{10.1109/TPDS.2008.218}.

\bibitem[Lynch(1996)]{Lyn96}
Nancy~A. Lynch.
\newblock \emph{Distributed Algorithms}.
\newblock Morgan Kaufmann Publishers Inc., 1996.

\bibitem[Mertzios et~al.(2013)Mertzios, Michail, Chatzigiannakis, and
  Spirakis]{MerMCS13}
George~B. Mertzios, Othon Michail, Ioannis Chatzigiannakis, and Paul~G.
  Spirakis.
\newblock Temporal network optimization subject to connectivity constraints.
\newblock In \emph{Proc. 40th International Colloquium on Automata, Languages,
  and Programming ({ICALP} 2013), Part {II}}, volume 7966 of \emph{LNCS}, pages
  657--668. Springer, 2013.
\newblock \doi{10.1007/978-3-642-39212-2\_57}.

\bibitem[Michail(2016)]{Mic16}
Othon Michail.
\newblock An introduction to temporal graphs: An algorithmic perspective.
\newblock \emph{Internet Mathematics}, 12\penalty0 (4):\penalty0 239--280,
  2016.
\newblock \doi{10.1080/15427951.2016.1177801}.

\bibitem[Michail and Spirakis(2016)]{MicS16}
Othon Michail and Paul~G. Spirakis.
\newblock Traveling salesman problems in temporal graphs.
\newblock \emph{Theor. Comput. Sci.}, 634:\penalty0 1 -- 23, 2016.
\newblock \doi{http://dx.doi.org/10.1016/j.tcs.2016.04.006}.

\bibitem[Scheffler(1994)]{Sch94}
Petra Scheffler.
\newblock A practical linear time algorithm for disjoint paths in graphs with
  bounded tree-width.
\newblock Technical Report 396, Department of Mathematics, Technische
  Universit\"at Berlin, 1994.

\bibitem[Scheideler(2002)]{Sch02}
Christian Scheideler.
\newblock Models and techniques for communication in dynamic networks.
\newblock In \emph{Proc. 19th Annual Symposium on Theoretical Aspects of
  Computer Science ({STACS} 2002)}, volume 2285 of \emph{LNCS}, pages 27--49.
  Springer, 2002.
\newblock \doi{10.1007/3-540-45841-7\_2}.

\bibitem[Shannon(1951)]{Sha51}
Claude Shannon.
\newblock Presentation of a maze-solving machine.
\newblock In \emph{Proc. 8th Conference of the Josiah Macy Jr. Found
  (Cybernetics)}, pages 173--180, 1951.
\newblock \doi{10.4472/3935300352.0030}.

\end{thebibliography}

\end{document}